\newtheorem{theorem}{Theorem}[section]
\newtheorem{corollary}{Corollary}
\newtheorem{lemma}[theorem]{Lemma}
\newtheorem{hypothesis}{Hypothesis}
\newtheorem*{hypothesis+}{Hypothesis+}
\newtheorem*{conjecture1}{Conjecture}
\theoremstyle{definition}
\newtheorem{definition}[theorem]{Definition}
\newtheorem{remark}{Remark}
\newtheorem*{example}{Example}
\newcommand\Algphase[1]{%
\vspace*{-.7\baselineskip}\Statex\hspace*{\dimexpr-\algorithmicindent-2pt\relax}\rule{\textwidth}{0.4pt}%
\Statex\hspace*{-\algorithmicindent}\textbf{#1}%
\vspace*{-.7\baselineskip}\Statex\hspace*{\dimexpr-\algorithmicindent-2pt\relax}\rule{\textwidth}{0.4pt}%
}
\DeclareMathOperator{\sgn}{sgn}
\title[Gap solitons: coding and method for computation] 
      {Gap solitons for the repulsive Gross-Pitaevskii equation with periodic potential: coding and method for computation}
\author[Georgy L. Alfimov, Pavel P. Kizin and Dmitry A. Zezyulin]{}
\subjclass{35Q55, 35C08, 34C41, 37B10, 65P99.}
 \keywords{Gross-Pitaevskii equation, gap solitons, nonlinear modes, coding.}
 \email{galfimov@yahoo.com}
 \email{p.p.kizin@gmail.com}
 \email{dzezyulin@fc.ul.pt}
\thanks{\color{blue}{The third author is supported by FCT (Portugal) through the grant No. UID/FIS/00618/2013}}
\begin{document}
\maketitle

\centerline{\scshape Georgy L. Alfimov, Pavel P. Kizin}
\medskip
{\footnotesize
 \centerline{Moscow Institute of Electronic Engineering, Zelenograd, Moscow, 124498, Russia}
} 

\medskip

\centerline{\scshape Dmitry A. Zezyulin}
\medskip
{\footnotesize
 \centerline{Centro de F\'{\i}sica Te\'orica e Computacional and Departamento de F\'{\i}sica,}
   \centerline{Faculdade de Ci\^encias da Universidade de Lisboa, Campo Grande, Ed. C8,  Lisboa P-1749-016, Portugal}
}

\bigskip

 \centerline{(Communicated by the associate editor name)}

\begin{abstract}
The paper is devoted to nonlinear localized modes (``gap solitons'') for the spatially
one-dimensional  Gross-Pitaevskii equation (1D GPE) with a periodic potential and
repulsive interparticle interactions. It has been recently shown (G. L. Alfimov,
A. I. Avramenko, Physica D, 254, 29 (2013)) that under certain conditions all the
stationary modes for the 1D GPE can be coded by bi-infinite sequences of symbols
of some finite alphabet (called ``codes'' of the solutions).  We  present and justify a
numerical method which allows to reconstruct  the profile of a localized mode by its
code. As an example,  the method is applied to compute the profiles of gap solitons
for 1D GPE with a cosine potential.
\end{abstract}

\section{Introduction}\label{Intro}

The Gross--Pitaevskii  equation  (GPE) is a commonly recognized model
for description of nonlinear matter waves in Bose--Einstein
condensates (BECs) \cite{PS03}. In the dimensionless form  spatially one-dimensional (1D) GPE,
\begin{equation}\label{GPE}
  i\Psi_t = -\Psi_{xx} + U(x) \Psi + \sigma |\Psi|^2 \Psi,
 \quad \sigma = \pm 1,
\end{equation}
describes the dynamics of  a cigar-shaped BEC cloud
in the so-called mean-field approximation. In (\ref{GPE}) the variables are properly scaled, $\Psi=\Psi(t,x)$ is the complex-valued macroscopic wavefunction of the condensate, and its squared amplitude $|\Psi( t,x)|^2$ describes the local density of  BEC. The nonlinear term $\sigma |\Psi|^2\Psi$ takes into account the  interactions between the particles: the case $\sigma=1$ corresponds to the repulsive interparticle interactions, $\sigma =-1$ corresponds to the attractive interactions. Both these cases are of physical relevance. The real-valued  potential $U(x)$ describes a trap which confines the BEC. In the case of optical confinement of BEC, the potential $U(x)$ is a periodic function.

An important particular class of solutions of the GPE corresponds to the
stationary modes. These  modes are of the form $\Psi(t, x)
= e^{-i\omega t} \psi(x)$ and the wavefunction $\psi(x)$ solves the
stationary GPE
\begin{equation}
\psi_{xx} + (\omega  - U(x))\psi - \sigma |\psi|^2\psi=0.
\end{equation}
The stationary modes of the condensate that satisfy the localization conditions
\begin{equation}\label{BoundCond}
  \lim_{x\to\pm \infty} \psi(x)=0
\end{equation}
can be regarded as {\it real-valued} \cite{AKS02} and satisfying the 1D real stationary GPE
\begin{equation}\label{eq:main}
  \psi_{xx} + (\omega - U(x))\psi - \sigma\psi^3=0,
  \quad  \psi,x\in \mathbb{R}.
\end{equation}
If the potential $U(x)$ is periodic, the modes satisfying (\ref{BoundCond})
and (\ref{eq:main}) are called \emph{gap solitons}. This terminology can
be understood if one considers the linear version of (\ref{eq:main}), i.e.
\begin{equation}\label{eq:linear}
  \psi_{xx} + (\omega - U(x))\psi =0.
\end{equation}
Due to the conditions (\ref{BoundCond}), (\ref{eq:linear})
approximates (\ref{eq:main}) for $|x|\gg 1$. Equation (\ref{eq:linear})
is a linear eigenvalue problem for the Schr\"odinger operator with periodic
potential $U(x)$ and  eigenvalue $\omega$. It is well-known \cite{BS91},
that the spectrum of such a problem typically consists of a  countable  set
of continuous \emph{bands}  separated by \emph{gaps}. Zero boundary
conditions at $x\to+\infty$ and $x\to -\infty$ for (\ref{eq:linear}) cannot
be satisfied  if $\omega$ lies in a band of the spectrum.  Therefore the
solutions of (\ref{eq:main}) satisfying (\ref{BoundCond}) can
exist only if $\omega$ belongs to one of the gaps.

\begin{figure}
  \centerline{\includegraphics [scale=0.8]{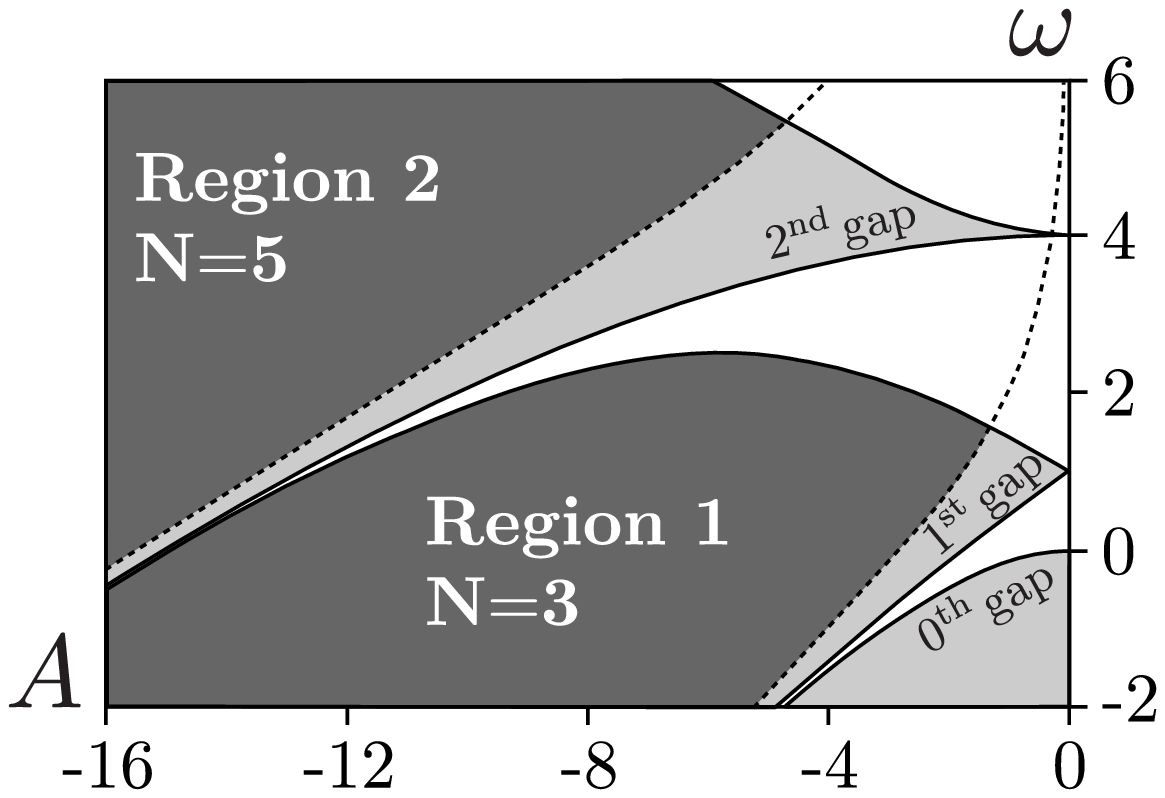}}
  \caption{The diagram of gaps and bands of the Matheiu equation
   $\psi_{xx}+(\omega-A\cos 2x)\psi=0$. The regions where
    according to \cite{AA13} the coding of bounded solutions for
    the nonlinear equation (\ref{eq:main}) is possible are shown in the first and the second
    gaps: if $\omega$ and $A$ belong to Region~1, then the bounded
    solutions can be coded using an alphabet of
    $N=3$ symbols; in Region~2  an alphabet of $N=5$ symbols is
    necessary.}\label{CodZones}
\end{figure}

It is known that apart from the gap solitons \cite{AKS02, Dark, Kiv2003,
PSK04}, (\ref{eq:main}) supports nonlinear periodic structures
(nonlinear Bloch waves) \cite{Kiv2003,BlochW}, domain walls
\cite{DomWalls05}, gap waves \cite{Kiv2006} and so on. The
authors of \cite{Zhang09_2,Zhang09_1} observed that more complex
structures described by (\ref{eq:main}), such as nonlinear Bloch waves, can be
built from elementary entities called \emph{fundamental gap solitons}
(FGS)  by means of the so-called \emph{composition relation}. This idea was
further developed in \cite{AA13}, where it was suggested to code
complex nonlinear modes of (\ref{eq:main}) using the methods of
symbolic dynamics. The approach of  \cite{AA13} is consistent with
the recent results of  \cite{FS14} where it was proved that in the semiclassical limit a certain  class of gap solitons for (\ref{eq:main}) can be mapped
to a set of  localized modes of Discrete Nonlinear Schr\"odinger Equation
(DNLS) which admit  the description in terms of coding sequences 
\cite{ABK04}.

The ``coding'' approach of \cite{AA13} exploits the following fact.  In the case of repulsive interactions
($\sigma=1$)  the ``most'' of the solutions of  (\ref{eq:main})
\emph{collapse} (i.e., tend to infinity) at some finite point of the real axis.
The complementary set of non-collapsing solutions (including all gap soliton solutions)
can be associated with a set on the
plane of initial conditions for (\ref{eq:main}). This set  has fractal structure and under certain  conditions  (Hypotheses 1-3 of  \cite{AA13}) it
can be described in terms of symbolic dynamics and mapped into the set
of bi-infinite sequences of symbols of some finite  alphabet $\mathcal{A}$.
The relation between the solutions and the codes is a homeomorphism:
each bounded in $\mathbb{R}$ solution of (\ref{eq:main})
corresponds to  unique bi-infinite code of the type
\begin{displaymath}
  \{\ldots, i_{-1},i_0,i_1,\ldots\}, \quad i_k\in {\mathcal{A}},
\end{displaymath}
and each code of this type corresponds to unique bounded in $\mathbb R$
solution of (\ref{eq:main}). It was argued in \cite{AA13}, that in the case of the cosine potential
\begin{equation}\label{CosPot}
  U(x)=A\cos 2x
\end{equation}
this coding is possible for vast areas in the parameter plane $(\omega,A)$.
These areas are situated in the gaps of linear equation (\ref{eq:linear}) which  becomes the Mathieu equation for the cosine potential (\ref{CosPot}) (see Figure~\ref{CodZones}).  For  Region~1 (situated in the first gap),
the alphabet ${\mathcal{A}}$ consists  of three symbols,  while in the Region~2
(situated in the second gap of the Mathieu equation) the alphabet of five symbols must be used.

In order to transform the coding approach  into a convenient tool for   analysis and computation of gap solitons, one has to answer the following practical questions:
\medskip

{\sc Question 1.} How can one get a code of a given gap soliton?

{\sc Question 2.} How can one compute a profile of gap soliton by its code?
\medskip

The answer to the Question~1 is relatively simple (see subsection~\ref{GenRem}) and  follows immediately from the results of
\cite{AA13}. The answer to the
Question~2 is not simple, and it is the focal point of  this paper. From the practical point of view, the answer to this question offers a method for numerical computation of gap solitons. While several numerical approaches to this task have been described in the literature
(see, for instance, a survey in the monograph \cite{Yang10}), the peculiar advantage of a coding-based numerical algorithm is that it yields  a nonlinear mode which corresponds to the specific  code chosen beforehand. Applying the algorithm to different codes it is possible to compute and classify gap solitons with  shapes of different    complexity assigned in advance.

{
Looking ahead, we note that a high-accuracy algorithm for computation of
nonlinear modes is a crucial ingredient of the numerically accurate  stability analysis. It is known that even the simplest FGSs from lower gaps undergo  oscillatory
instabilities under both  attractive \cite{PSK04}  and repulsive    \cite{KZA16} nonlinearities (instabilities of this type have been also found in the coupled mode limit \cite{BarPelZem98,BarZem00,Malomed94}).
The oscillatory instabilities of gap solitons can be extremely  weak and
can be revealed by a quite thorough study only \cite{KZA16}. This, in particular, requires to compute the profile of nonlinear mode with    high accuracy.}

In what follows, the consideration is restricted by the repulsive version of
GPE ($\sigma=1$). Section~\ref{Coding} contains a brief description of
the coding approach. Section~\ref{Structure} includes some theoretical results
necessary for construction of the algorithm. Section~\ref{subAlgorithm} contains the
description of the main numerical algorithm whose listing   is presented in Appendix~\ref{App_D}.
In section~\ref{Results}, the method is illustrated for the GPE with cosine potential
(\ref{CosPot}). Section~\ref{Disc} concludes the paper and  presents a short outlook
for the future work.

\section{Review of the coding approach}
\label{Coding}


Let us briefly describe the method for coding of nonlinear modes for
(\ref{eq:main}) (see \cite{AA13} for the detailed presentation).  The basis of this method, the coding theorem, is formulated in subsection~\ref{sec:codth}. In order to formulate the coding theorem, in subsection~\ref{sec:collapse} and \ref{sec:geo} we introduce several definitions, assumptions, and auxiliary concepts.

\subsection{Collapsing solutions}
\label{sec:collapse}
For  $\sigma=1$ the equation (\ref{eq:main}) for the nonlinear modes reads
\begin{equation}\label{1D_rep}
  \psi_{xx} + (\omega - U(x))\psi - \psi^3=0, \quad
  \psi,x\in \mathbb{R}.
\end{equation}
The potential $U(x)$ is assumed to be bounded, $\pi$-periodic and even.
A prototypical example is the cosine potential (\ref{CosPot}).
The following definitions are necessary,
\cite{AA13}.
\begin{definition}
  The solution $\psi(x)$ of Cauchy problem $\psi(0)=\psi_0$, $\psi_x(0)=\psi_0'$ of (\ref{1D_rep}) collapses to $+\infty$ (respectively, collapses to $-\infty$) at the point $x=x_0>0$ if  $ \lim_{x\to x_0-0}\psi(x)=+\infty$, (respectively, $ \lim_{x\to x_0-0}\psi(x)=-\infty$). Similarly, the solution $\psi(x)$ of this Cauchy problem collapses to $+\infty$ (respectively, collapses to $-\infty$) at the point $x=x_0<0$ if $ \lim_{x\to x_0+0}\psi(x)=+\infty$ (respectively, $ \lim_{x\to x_0+0}\psi(x)=-\infty$).
\end{definition}

The balance of dispersive and nonlinear terms in (\ref{1D_rep}) allows to conclude that in vicinity of singularity $x=x_0$ the asymptotics of a collapsing solution is
\begin{equation*}
\psi(x)\sim\pm \frac{\sqrt{2}}{x-x_0}.
\end{equation*}

\begin{definition}\label{def1}
  Let $(\psi_0,\psi_0')$ be a point on the plane $\mathbb{R}^2=(\psi,\psi')$.
  Let $\psi(x)$ be a solution of the Cauchy problem for (\ref{1D_rep}) with ininital data $\psi(0)=\psi_0$,  $\psi_x(0)=\psi_0'$. Then the point $(\psi_0,
  \psi_0')$ is called
  \begin{itemize}
  \item $P$-collapsing to $+\infty$ (or $-\infty$), if the solution
  	$\psi(x)$ collapses to $+\infty$ (correspondingly, to $-\infty$) at the point $x=P$;
  \item $P$-non-collapsing forward point, if $\psi(x)$ does not collapse at any
    $x\in(0;P]$;
  \item $P$-non-collapsing backward point, if $\psi(x)$ does not collapse at any $x\in[-P;0)$.
  \end{itemize}
\end{definition}

\begin{example}
Let $U(x)\equiv 0$ and $\omega=0$. Then:

a. The solution of Cauchy problem for (\ref{1D_rep}) and
$\psi(0)=\psi_x(0)=\sqrt{2}$ is
\begin{equation*}
\psi(x)=\frac{\sqrt{2}}{1-x}.
\end{equation*}
Then $\psi(x)$ collapses to $+\infty$ at $x_0=1$. The point $(\sqrt{2},\sqrt{2})$ is $1$-collapsing to $+\infty$ point. Simultaneously, it is $\infty$-non-collapsing backward point, since no collapse occurs at the interval $(-\infty;0)$.

b. The solution of Cauchy problem for (\ref{1D_rep}) and
$\psi(0)=1$, $\psi_x(0)=0$ is
\begin{eqnarray*}
\psi(x)=\frac{1}{{\rm cn}(x;\sqrt{2}/2)}.
\end{eqnarray*}
Here ${\rm
cn}~(x;k)$ is the Jacobi elliptic function.
Then $\psi(x)$ collapses to $+\infty$ at $x_\pm=\pm K(\sqrt{2}/2)$, where $K(\xi)$ is the complete elliptic integral of the first kind. The point $(1,0)$ is $K(\sqrt{2}/2)$-collapsing to $+\infty$ point and $-K(\sqrt{2}/2)$-collapsing to $+\infty$ point. Also, since $K(\sqrt{2}/2)\approx 1.854>1$ one can say that
the point $(1,0)$ is $1$-non-collapsing forward point and $1$-non-collapsing backward point.
\end{example}
\medskip

\begin{definition}\label{def2}
  For (\ref{1D_rep}) denote by ${\mathcal{U}}_P^+$ the set of all $P$-non-collapsing forward points and by ${\mathcal{U}}_P^-$ the set of all $P$-non-collapsing backward points.
\end{definition}

\begin{figure}
  \includegraphics [width=0.9322\textwidth]{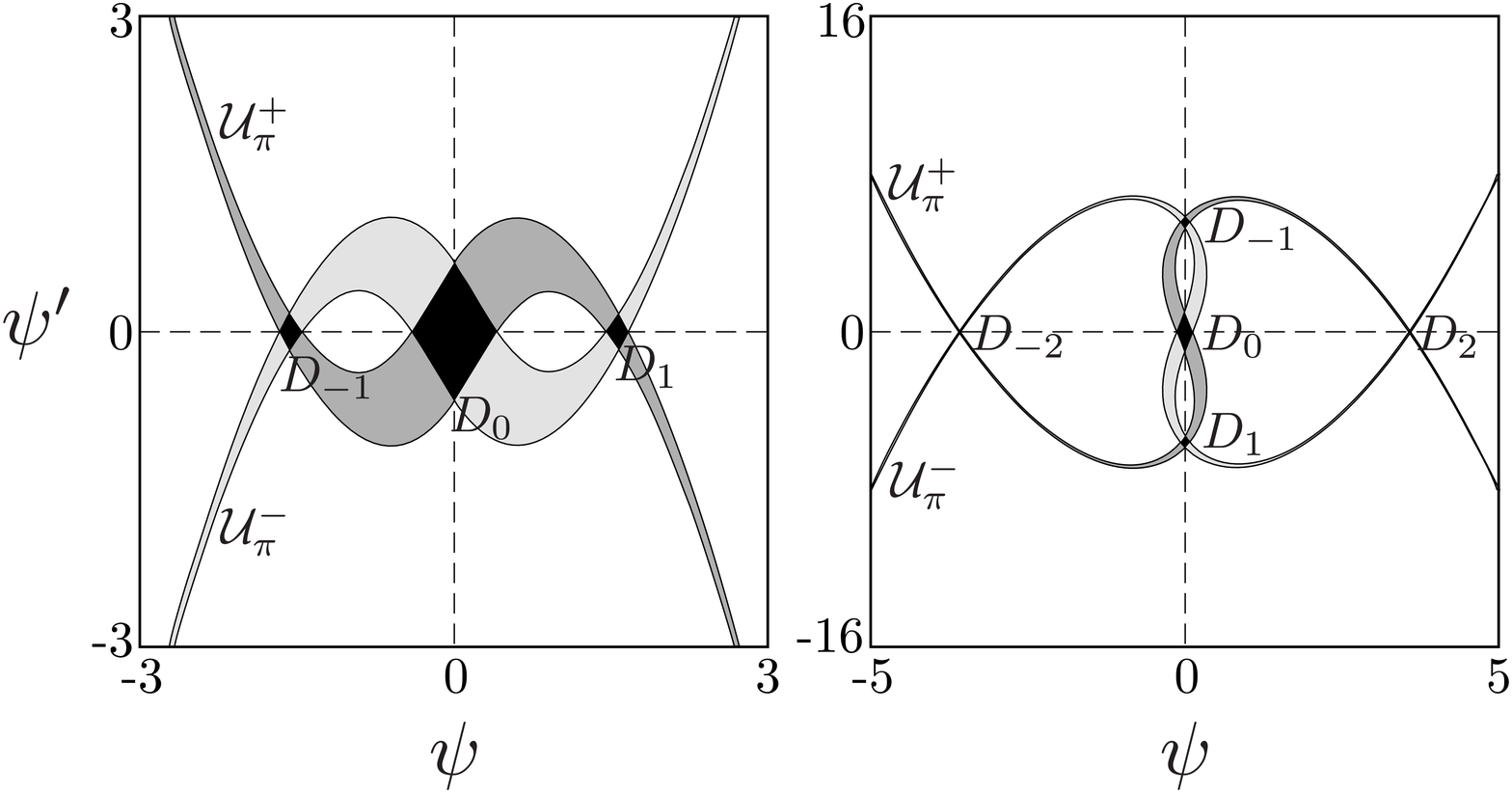}
  \caption{The sets ${\mathcal{U}}_\pi^\pm$ and $\Delta_0$ for (\ref{1D_rep}) with the potential (\ref{CosPot}). Panel (a): parametric point $(\omega,A)=(1,-3)$ from Region 1 in Figure~\ref{CodZones}, in this case $\Delta_0$ consists of three connected components, $D_{-1}$, $D_0$ and $D_{1}$; Panel (b): parametric point $(\omega,A)=(4,-10)$ from Region 2 in Figure~\ref{CodZones}, in this case $\Delta_0$ consists of five connected components, from $D_{-2}$ to $D_2$.}
  \label{U_pi}
\end{figure}

\begin{definition} Define the Poincare map $T:\mathbb{R}^2\to\mathbb{R}^2$ associated
  with (\ref{1D_rep}) as follows: for a point $(\psi_0,\psi_0')\in \mathbb{R}^2$,  $T(\psi_0,\psi_0')=(\psi(\pi),\psi_x(\pi))$ where
  $\psi(x)$ is a solution of the Cauchy problem for (\ref{1D_rep}) with
  initial data $\psi(0)=\psi_0$ and $\psi_x(0)=\psi_0'$.
\end{definition}
The Poincare map  $T$ and the sets ${\mathcal{U}}_\pi^\pm$ for (\ref{1D_rep}) have the following properties:
\begin{enumerate}
  \item $T$ is an area-preserving diffeomorphism.
  \item $T$ is defined on the set ${\mathcal{U}}_\pi^+$ only. Correspondingly,
    inverse map $T^{-1}$ is defined on the set ${\mathcal{U}}_\pi^-$. Moreover, $T{\mathcal{U}}_\pi^+={\mathcal{U}}_\pi^-$ and $T^{-1}{\mathcal{U}}_\pi^-={\mathcal{U}}_\pi^+$.

  \item Since $U(x)$ is an even function, the sets $\mathcal{U}_\pi^\pm$ are symmetric with respect to $\psi$.

  \item Since the nonlinearity in (\ref{1D_rep}) is an odd function of $\psi$,
    the sets ${\mathcal{U}}_\pi^\pm$ are symmetric with respect to the origin (0,0).

  \item Since $\mathcal{U}_\pi^\pm$ are symmetric with respect to $\psi$ and $(0,0)$,
  they are symmetric with respect to $\psi'$.
  \item The boundary of ${\mathcal{U}}_\pi^+$  (to be denoted by $\partial
    {\mathcal{U}}_\pi^+$)  consists of $\pi$-collapsing to $+\infty$ or
    $-\infty$ points. This boundary is formed by continuous curves (Corollary of Theorem 2.2 in \cite{AA13}, see also \cite{UMJ16}) and
    is symmetric with respect to the origin. Moreover, if some point
    $(\psi,\psi')$ of $\partial {\mathcal{U}}_\pi^+$ is $\pi$-collapsing to
    $+\infty$, then the symmetric point $(-\psi,-\psi')$ also belongs to
    $\partial {\mathcal{U}}_\pi^+$ and it is $\pi$-collapsing  to $-\infty$.
    The similar situation takes place for $\partial {\mathcal{U}}_\pi^-$, the
    boundary of ${\mathcal{U}}_\pi^-$. It consists of $-\pi$-collapsing to $+\infty$ of $-\infty$ points.
\end{enumerate}

\begin{figure}
  \includegraphics [width=\textwidth]{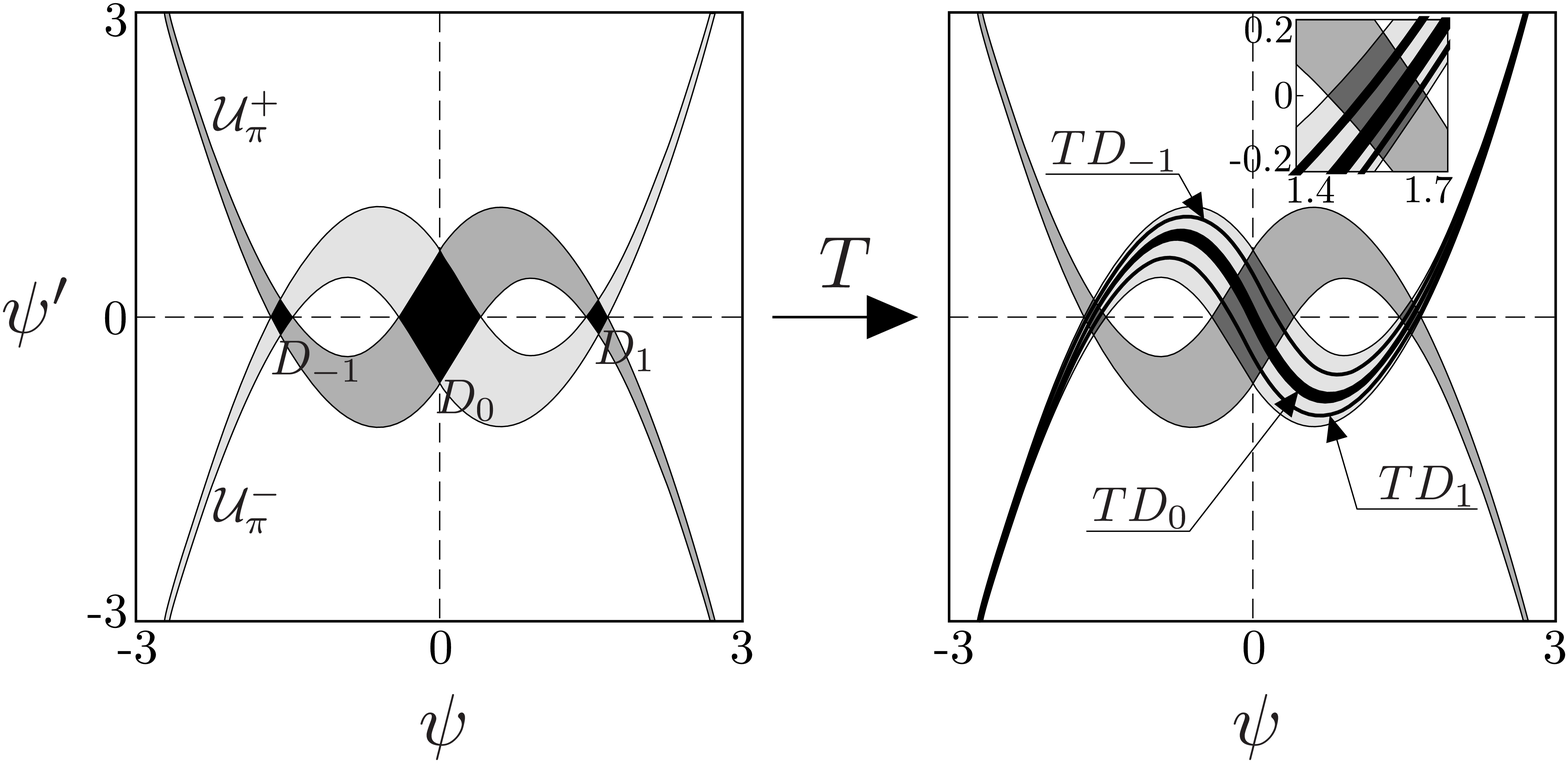}
  \caption{Action of the map $T$ on $\Delta_0$ which consists of three connected components, $D_{-1}$, $D_{0}$ and $D_{1}$. The cosine potential (\ref{CosPot}) was used for (\ref{1D_rep}). Parameters are $\omega=1$ and $A=-3$.}
  \label{Action}
\end{figure}

\begin{definition}
  Denote
  \begin{equation}\label{D0}
  \Delta_0 = \mathcal{U}_{\pi}^+ \cap \mathcal{U}_{\pi}^-.
  \end{equation}
\end{definition}
Evidently, $\Delta_0$ consists of the points that have both $T$-image
and $T$-pre-image. The features of $\Delta_0$ are as follows:

\begin{enumerate}
  \item $\Delta_0$ is bounded (Theorem 2.1 in \cite{AA13})
  and includes the origin $(0,0)$.

  \item $\Delta_0$ is open (it follows from Theorem 2.2 in \cite{AA13}).

  \item If $\Delta_0$ consists of the finite number $N$ of connected
  components, then  $N$ is {\it odd} ($N\geq 3$), and for each
  connected component $D\subset\Delta_0$ and $(0,0)\notin D$ one can
  find a dual component $D^*\subset\Delta_0$ such that $D$ and $D^*$ are situated symmetrically with respect to the origin.

\end{enumerate}

\begin{definition}
  The sets $\Delta_n^\pm$, $n=0,1,\ldots$ are defined by the following
  recurrence rule
  \begin{align}
    &\Delta_0^-=\Delta_0,\quad \Delta^-_{n+1}=
      T\Delta_{n}^-\cap\Delta_0, 	\label{D-n} \\
    &\Delta_0^+=\Delta_0,\quad \Delta^+_{n+1}=
      T^{-1}\Delta_{n}^+\cap\Delta_0.    \label{D+n}
  \end{align}
\end{definition}
Evidently, if $p\in \Delta_{n}^+$ then there exist $Tp\in
\Delta_{n-1}^+\subseteq \Delta_0$, $T^2p\in  \Delta_{n-2}^+
\subseteq \Delta_0$, $\ldots$, $T^np\in \Delta_{0}$. On the other hand,
if for a point $p\in \Delta_0$ there exists $q=T^n p\in\Delta_0$, then
$T^{-1}q=T^{n-1}p\in \Delta_1^{+}$, $T^{-2}q=T^{n-2}p\in
\Delta_2^{+}$, etc. Repeating the procedure $n$ times one arrives at the
relations
\begin{align*}
  &\{p\in \Delta^{+}_{n}\} \iff
    \{p,Tp,T^2p,\ldots,T^n p\in\Delta_0\}, \\
  &\{p\in \Delta^{-}_{n}\} \iff
    \{p,T^{-1}p,T^{-2}p,\ldots,T^{-n} p\in\Delta_0\}.
\end{align*}
These relations imply that
\begin{align*}
  &\ldots \subset \Delta_{n+1}^+\subset \Delta_{n}^+\subset
    \ldots \subset
    \Delta_{1}^+\subset \Delta_{0},\\
  &\ldots \subset \Delta_{n+1}^-\subset \Delta_{n}^-\subset
    \ldots \subset
    \Delta_{1}^-\subset \Delta_{0}
\end{align*}
Let us illustrate the construction of the sets $\Delta_n^\pm$ by an
example. Figure~\ref{Action} shows the action of $T$ on $\Delta_0$
for the case of potential $U(x)=-3\cos 2x$ and $\omega=1$. The set
$\Delta_0$ consists of three connected components, marked  $D_{-1}$,
$D_0$ and $D_1$. $T$-images of these components are
three infinite strips. Each of these infinite strips intersects each of the connected components $D_{-1,0,1}$, and therefore the set $\Delta^-_1=T\Delta_0\cap\Delta_0$
consists of nine connected components. Further, the set $\Delta_2^-$ consists of
27 connected components and so on. The sets $\Delta_n^+$ are symmetric
to $\Delta_n^-$ with respect to $\psi$-axis, so they have the same
properties.

\subsection{Assumptions}\label{GeomCon}

\label{sec:geo}

Recall  that a function $f(x)$ is called
{\it Lipschitz function} if there exists finite $\varkappa>0$ such that for any $x_1$ and $x_2$ one has
$|f(x_2)-f(x_1)|\leq\varkappa|x_2-x_1|$.
In order to
formulate the main result of \cite{AA13} let us introduce the following
definitions.

\begin{definition}\label{def:incrdecr}
Let us call a plane curve $\gamma$ on the plane $(\psi,\psi')$ {\rm increasing} (correspondingly, {\rm decreasing}), if $\gamma$ is a graph of \textcolor{black}{increasing}
(correspondingly, decreasing) Lipschitz function  $\psi'=G(\psi)$.
\end{definition}

\begin{definition}[An island]\label{def:island1}
We call an island
an open curvilinear quadrangle $D\subset \mathbb{R}^2$ bounded  by two pairs of curve segments, such that the segments from one  pair do not intersect and are increasing, and the segments from another  pair do not intersect and are decreasing.
\end{definition}

Let us  make the following assumptions about the set $\Delta_0$ and
the map $T$ associated with (\ref{1D_rep}).\medskip

\begin{hypothesis}\label{hyp1}
The set $\Delta_0$ defined by (\ref{D0}) consists of $N=2L+1$  disjoined islands
$D_i$, $i=-L,\ldots,L$.
\end{hypothesis}

We say that one boundary of an island is opposite to another boundary if these two boundaries do not intersect. \medskip

\begin{hypothesis+}\label{hyp+}
For each island $D_i$ from Hypothesis~\ref{hyp1}, one boundary of $D_i$ consists of $\pi$-collapsing to $+\infty$ points, and the opposite   boundary consists of $\pi$-collapsing to $-\infty$ points. Similarly, from  another pair of opposite  boundaries, one consists of  $-\pi$-collapsing to $+\infty$ points, and another   consists of  $-\pi$-collapsing to $-\infty$ points.
\end{hypothesis+}

In what follows, for an island $D_i$, we will denote $\alpha^\pm_i$ the boundaries consisting of  $\pi$-collapsing to $\pm\infty$ points and denote $\beta^\pm_i$ the boundaries consisting of  $-\pi$-collapsing to $\pm\infty$ points, see Figure~\ref{Island(def)}.

Hypothesis\hyperref[hyp+]{+} naturally holds when ${\mathcal{U}}_\pi^\pm$ are infinite
curvilinear strips (see Figure~\ref{U_pi} as an example). In this case the
boundary $\partial {\mathcal{U}}_\pi^+$ is formed by two curves, one of them consists of $\pi$-collapsing
to $+\infty$  points and other one consists of $\pi$-collapsing to
$-\infty$ points.  These  curves can be mapped one into another  by  reflection with respect to the origin. The similar situation takes place for  the boundary
$\partial {\mathcal{U}}_\pi^-$. The opposite sides $\alpha^\pm$ of each
island are formed by segments of opposite boundaries of $\partial {\mathcal{U}}_\pi^+$,
as well as opposite sides $\beta^\pm$ are formed by
segments of opposite boundaries of $\partial {\mathcal{U}}_\pi^-$.

\begin{definition}[$v$- and $h$- curves]\label{def8}
Let
$D$ be one of the  islands from Hypothesis~\ref{hyp1} bounded by curve segments $\alpha^\pm$ and  $\beta^\pm$.
We call $v$-curve a Lipschitz curve $\beta$ with endpoints on
$\alpha^-$ and $\alpha^+$ which is increasing if $\beta^\pm$ are increasing and is decreasing if $\beta^\pm$ are decreasing. Similarly,
we call $h$-curve a Lipschitz curve $\alpha$ with endpoints on
$\beta^-$ and $\beta^+$ which is increasing if $\alpha^\pm$ are increasing and is decreasing if $\alpha^\pm$ are decreasing.
\end{definition}

\begin{figure}
\centering
\includegraphics [width=0.5\textwidth]{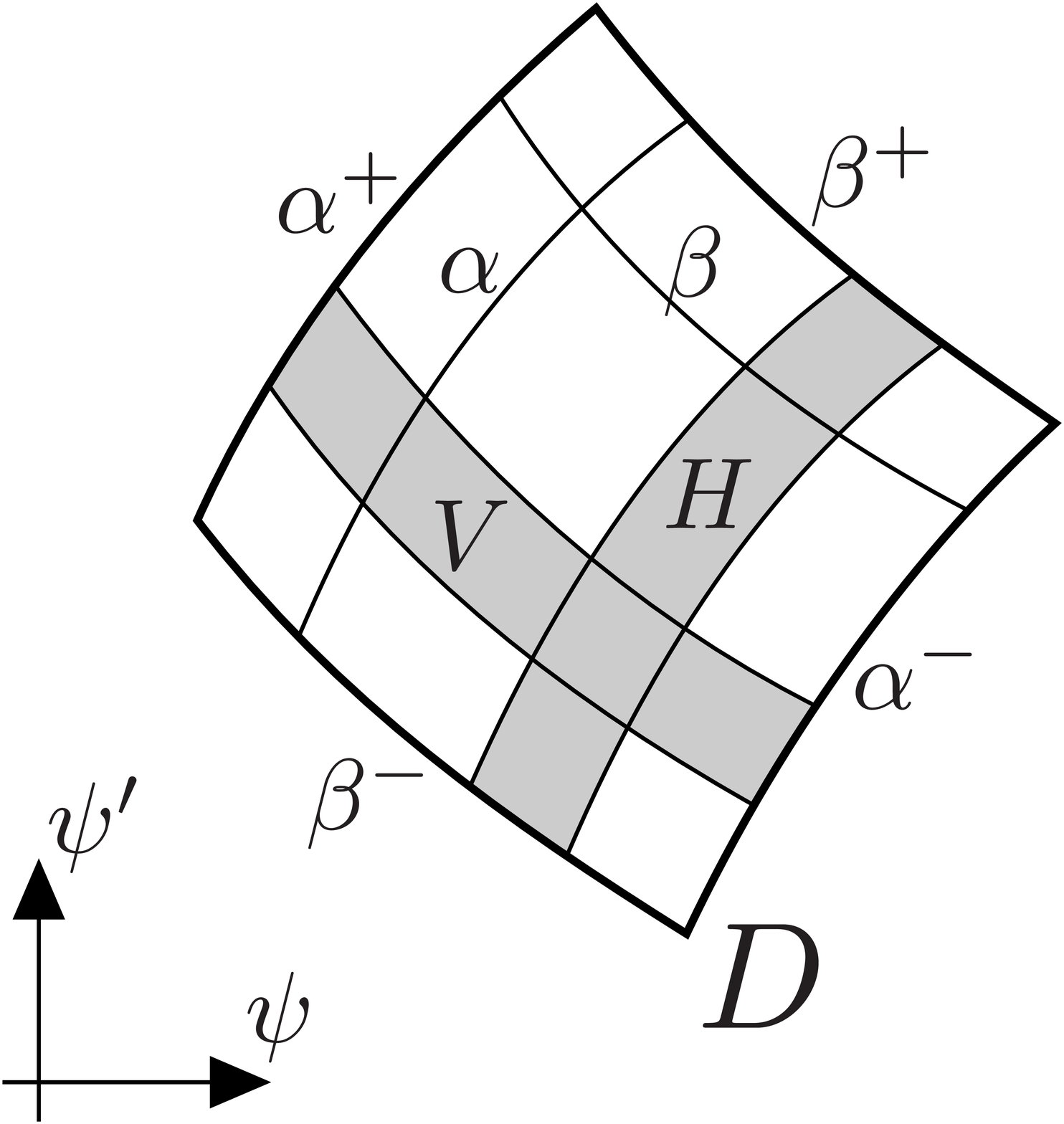}
\caption{An island $D$ with $v$-curve $\beta$, $v$-strip $V$, $h$-curve
$\alpha$ and $h$-strip $H$.}
\label{Island(def)}
\end{figure}

\begin{definition}[$v$- and $h$- strips]
Let $D$ be an island. We call $v$-strip an open curvilinear strip
 between two non-intersecting $v$-curves. Similarly, we call
$h$-strip an open curvilinear strip  between two
non-intersecting $h$-curves.
\end{definition}

\begin{hypothesis}\label{hyp2}
The Poincare map $T$ associated with (\ref{1D_rep}) is such that
	\begin{itemize}
		\item[(a)] $T$ maps $v$-strips of any island $D_i$, $i=-L,\ldots,L$ from $\Delta_0$ in such a
		way that for any $v$-strip $V$, $V\subseteq D_i$, each of the intersections
		$TV\cap D_j$, $j=-L,\ldots,L$, is nonempty and is a $v$-strip.
		\item[(b)] $T^{-1}$ maps $h$-strips of any $D_i$, $i=-L,\ldots,L$,
		in such a way that for any $h$-strip $H$, $H\subseteq  D_i$, each of the
		intersections $T^{-1}H\cap D_j$, $j=-L,\ldots,L$, is nonempty
		and is an $h$-strip.\medskip
	\end{itemize}
\end{hypothesis}

\begin{hypothesis}\label{hyp3}
The sets $\Delta_n^\pm$ defined by (\ref{D-n}), (\ref{D+n}) are such that
	\begin{displaymath}
	\lim_{n\to\infty}\mu(\Delta_n^\pm)=0,
	\end{displaymath}
	where $\mu(S)$ is the  area of $S$.
\end{hypothesis}

For any particular choice of potential $U(x)$ and parameter $\omega$ the
validity of Hypotheses~\ref{hyp1}--\ref{hyp3} should be supported by numerical arguments. On the basis of a systematic numerical investigation,  the following conjecture was
formulated  in \cite{AA13}:\medskip

\begin{conjecture1}\label{conj}
If the parameters $\omega$ and $A$ for (\ref{1D_rep}) with the  cosine potential (\ref{CosPot}) belong to dark gray areas (Regions 1 and 2)  on the parameter plane $(\omega,A)$  in Figure~\ref{CodZones}, then   Hypotheses~\ref{hyp1}--\ref{hyp3}  hold.
\end{conjecture1}

 The upper boundaries of Regions 1 and 2 coincide with upper boundary
of the first and the second gap, respectively. The lower boundaries of these
regions correspond to breaking of Hypothesis~\ref{hyp1}: if $\omega$ and $A$ lie below these boundaries, then $\Delta_0$ is not split into a set of disjoint  islands, i.e., Hypothesis~\ref{hyp1} does not hold. Hypothesis\hyperref[hyp+]{+} was not introduced in \cite{AA13}. However for  $\omega$ and $A$ lying in Regions 1 and 2  the sets ${\mathcal{U}}_\pi^\pm$ are
curvilinear strips bounded by two curves, consisting of $\pm\pi$-collapsing
to $+\infty$  points and of $\pm\pi$-collapsing to
$-\infty$ points correspondingly. This implies that Hypothesis\hyperref[hyp+]{+} also holds for
$\omega$ and $A$ lying in Regions 1 and 2.


\subsection{Coding theorem}
\label{sec:codth}

\begin{definition}
 Denote by  $\Omega^N$, $N=2L+1$, the set of bi-infinite sequences
  $\{\ldots,i_{-1},i_0,i_1,\ldots\}$, $i_k=-L,\ldots,L$.
\end{definition}
The set  $\Omega^N$ has the structure of a  topological space where the
neighborhood of a point $a^*=\{\ldots,i^*_{-1},i^*_0,i^*_1,\ldots\}$
is defined by the sets
\begin{displaymath}
  W_k(a^*)=\{a\in \Omega^N|~ i_j=i^*_j,|j|<k\},\quad k=1,2,\ldots
\end{displaymath}

\begin{definition}
  Denote by $\mathcal{P}$ the set of bi-infinite sequences (called orbits)
  \begin{displaymath}
    {\bf s}=\{\ldots,p_{-1},p_0,p_1,\ldots\},\quad Tp_n=p_{n+1},
  \end{displaymath}
  where each $p_n=(\psi_n,\psi'_n)$, $n=0,\pm1,\pm 2,\ldots$, belongs to
  $\Delta_0$.
\end{definition}
It is clear that  any orbit is uniquely determined by its single entry, for
instance, $p_0$. The point $p_0$ cannot be arbitrary since, as it was mentioned before,
a great part of points of the plane of initial data are collapsing, either
forward or backward (or both), and a collapsing point   cannot generate a
bi-infinite sequence ${\bf s}$. The set $\mathcal{P}$ has the structure of a metric
space where the distance $\rho$ between the elements ${\bf s}^{(1)}=
\{\ldots,p_{-1}^{(1)},p_0^{(1)},p_1^{(1)},\ldots\}$ and ${\bf
s}^{(2)}=\{\ldots,p_{-1}^{(2)},p_0^{(2)},p_1^{(2)},\ldots\}$ is
defined as Euclidean distance between the points $p_0^{(1)}$ and
$p_0^{(2)}$ in $\mathbb{R}^2$,
\begin{displaymath}
  \rho({\bf s}^{(1)},{\bf s}^{(2)})=\sqrt{\left(\psi_0^{(2)}-
  \psi_0^{(1)}\right)^2+\left(\psi_0'^{(2)}-\psi_0'^{(1)}\right)^2}.
\end{displaymath}


\begin{definition}\label{def:sigma}
  Define a map $\Sigma:{\mathcal{P}}\to \Omega^N$ as follows: $i_k$ is the
  number $i$ of the component $D_i$ where the point $p_k$ lies.
\end{definition}

The following statement was proved in \cite{AA13}:

\begin{theorem}\label{theorem1}
 Assume that Hypotheses~\ref{hyp1}--\ref{hyp3} hold. Then $\Sigma$ is a
  homeomorphism between the topological spaces $\mathcal{P}$ and $\Omega^N$.
\end{theorem}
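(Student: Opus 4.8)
The plan is to establish that $\Sigma:\mathcal{P}\to\Omega^N$ is a bijection with both $\Sigma$ and $\Sigma^{-1}$ continuous, treating the four requirements in turn: well-definedness, injectivity, surjectivity, and bicontinuity. Well-definedness is immediate from Definition~\ref{def:sigma} together with Hypothesis~\ref{hyp1}: each point $p_k$ of an orbit lies in $\Delta_0$, which by Hypothesis~\ref{hyp1} is a disjoint union of the islands $D_{-L},\ldots,D_L$, so the index $i_k$ is unambiguously defined. Continuity of $\Sigma$ is also essentially formal: if two orbits ${\bf s}^{(1)},{\bf s}^{(2)}$ have $p_0$'s that are close in $\mathbb{R}^2$, then because $T$ is a diffeomorphism on $\mathcal{U}_\pi^+$ (property~1 of the Poincar\'e map), the iterates $p_k^{(1)}$ and $p_k^{(2)}$ stay close for any fixed range $|k|<n$; since the islands are disjoint open sets, closeness of $p_k^{(1)}$ to $p_k^{(2)}$ forces them into the same island, so the codes agree on $|k|<n$, i.e. ${\bf s}^{(2)}\in W_n$-preimage of a neighborhood of $\Sigma{\bf s}^{(1)}$.

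The heart of the argument is the bijectivity, and this is where Hypotheses~\ref{hyp2} and~\ref{hyp3} do the work. For injectivity and surjectivity simultaneously, I would use the standard Smale-horseshoe/Conley--Moser technology adapted to the island geometry. Fix a code $\{\ldots,i_{-1},i_0,i_1,\ldots\}\in\Omega^N$. For each finite window $[-n,n]$ define the nested set
\begin{equation*}
  R_n = \bigl\{p\in D_{i_0} : T^k p\in D_{i_k}\text{ for }1\le k\le n\bigr\}\cap
        \bigl\{p\in D_{i_0} : T^{-k}p\in D_{i_{-k}}\text{ for }1\le k\le n\bigr\}.
\end{equation*}
The first claim is that $\{p : T^k p\in D_{i_k},\ 1\le k\le n\}$ is a $v$-strip inside $D_{i_0}$ and that $\{p : T^{-k}p\in D_{i_{-k}},\ 1\le k\le n\}$ is an $h$-strip inside $D_{i_0}$. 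This follows by induction on $n$ from Hypothesis~\ref{hyp2}: part~(a) says $T$ maps each $v$-strip of an island onto something that meets every island $D_j$ in a $v$-strip, so pulling back the condition $T^kp\in D_{i_k}$ one step at a time keeps us in the class of $v$-strips; part~(b) does the same for $h$-strips under $T^{-1}$. Then one invokes the elementary fact — which I would state and prove as a small lemma using the Lipschitz monotonicity in Definitions~\ref{def:incrdecr}--\ref{def8} — that a $v$-strip and an $h$-strip in the same island intersect in a nonempty ``curvilinear quadrangle,'' so $R_n\ne\emptyset$. The sets $R_n$ are nested and their closures are compact, so $\bigcap_n \overline{R_n}\ne\emptyset$; this yields a point $p_0$, hence an orbit ${\bf s}$, with $\Sigma{\bf s}$ equal to the prescribed code, giving surjectivity. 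For injectivity one shows $\bigcap_n \overline{R_n}$ is a single point: the $v$-strip part of $R_n$ lies inside $\Delta_n^+$ and the $h$-strip part inside $\Delta_n^-$ (by the characterization preceding the theorem, $p\in\Delta_n^+\iff p,Tp,\ldots,T^np\in\Delta_0$), so $\mathrm{diam}(R_n)^2 \le \mu(\Delta_n^+)+\mu(\Delta_n^-)$ up to a geometric constant coming from the uniform Lipschitz bounds; Hypothesis~\ref{hyp3} forces this to zero.

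Finally, continuity of $\Sigma^{-1}$ follows from the same diameter estimate: if two codes agree on $|k|<n$, their preimages both lie in the corresponding $R_{n-1}$, whose diameter tends to $0$ as $n\to\infty$, so the $p_0$'s are close; hence $\Sigma^{-1}$ maps the basic neighborhood $W_n$ into a small ball, which is exactly continuity in the metric $\rho$ on $\mathcal{P}$. I expect the main obstacle to be the geometric lemma that a $v$-strip and an $h$-strip inside one island always intersect in a connected quadrangle whose diameter is controlled by the areas of $\Delta_n^\pm$: this requires using the increasing/decreasing Lipschitz structure carefully (so that a $v$-curve and an $h$-curve cross in exactly one point and cannot be nearly parallel), and it is the step where the hypotheses' geometric content — as opposed to their purely set-theoretic content — is genuinely needed. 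Everything else is bookkeeping with the nested-set/shift-map formalism, and the symmetry properties of $\Delta_0$ (the odd $N$, the dual components) are not needed for the homeomorphism itself, only for interpreting the codes.
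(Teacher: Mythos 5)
A preliminary remark: the paper does not actually prove Theorem~\ref{theorem1} — it is imported from \cite{AA13} ("The following statement was proved in \cite{AA13}") — so there is no in-paper proof to compare against. Your plan is the standard Conley--Moser/horseshoe argument (finite-window strips, nested intersections for surjectivity, shrinking diameters via Hypothesis~\ref{hyp3} for injectivity and for continuity of $\Sigma^{-1}$, continuity of $\Sigma$ from continuity of $T$ and openness of the disjoint islands), which is the approach the cited work is built on, and the overall architecture is right. One labeling error: in the paper's conventions the forward-itinerary set $\{p\in D_{i_0}:T^kp\in D_{i_k},\ 1\le k\le n\}$ is the \emph{$h$-strip} $H_{i_0\ldots i_n}$, obtained by pulling back one step at a time with $T^{-1}$ via Hypothesis~\ref{hyp2}(b), while the backward-itinerary set is a \emph{$v$-strip} built with $T$ via Hypothesis~\ref{hyp2}(a); you state it the other way around and would therefore be invoking the wrong half of Hypothesis~\ref{hyp2} at each inductive step. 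The roles are symmetric, so this is fixable by swapping names, but as written the induction does not match the hypothesis it appeals to.

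The substantive gap is in the surjectivity step: you take closures, extract $p_0\in\bigcap_n\overline{R_n}$ by compactness, and conclude "hence an orbit". That inference is not automatic here, because the boundaries of the islands and of the $h$-/$v$-strips consist of collapsing points (Hypothesis+, and the curves $\alpha^{\pm}_{i_0\ldots i_n}$ consist of $(n+1)\pi$-collapsing points): at such a point $T$, $T^{-1}$, or some higher iterate is undefined, so a limit point sitting on a boundary has no code at all, and $\Sigma$ would not be surjective onto the prescribed sequence. One must show the limit point lies in the \emph{open} islands with all forward and backward iterates defined — e.g.\ using the openness of the set of initial data that collapse before a prescribed time (this is the content behind the openness of $\mathcal{U}_P^{\pm}$, Theorem 2.2 of \cite{AA13}), so that a limit of points non-collapsing on $[0,(n+1)\pi)$ is again non-collapsing there, whence the intersection point never collapses and its iterates cannot lie on island boundaries. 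Relatedly, your diameter estimate $\mathrm{diam}(R_n)^2\lesssim\mu(\Delta_n^+)+\mu(\Delta_n^-)$ is not the natural form: Hypothesis~\ref{hyp3} controls widths because each strip crosses its island completely (area dominates a fixed multiple of width), and the transversality of increasing versus decreasing Lipschitz graphs then converts small widths into small diameter, giving a bound linear in the areas; you correctly flag this lemma as the place where the geometric hypotheses do real work, but it, together with the boundary/collapse issue above, is what must be supplied before the plan is a proof.
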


In the rest of the paper we assume that Hypotheses~\ref{hyp1}--\ref{hyp3} and Hypothesis\hyperref[hyp+]{+} hold.

\section{Structure of the set $\Delta_0$}
\label{Structure}

Consider the structure of the set $\Delta_0$ in details.
{According to the made assumptions,} $\Delta_0$ consists
of $N$ islands, $N=2L+1$, indexed as $D_{-L},\ldots, D_L$. By Theorem~\ref{theorem1}, there exists a homeomorphism between the set of bi-infinite orbits and  ``codes'' of
the form $\{\ldots,i_{-1},i_0,i_1,\ldots\}$, where each  $i_k$ is a number in the range $-L, -L+1, \ldots, L$.
Presence  of a number $i_k$ at some position of the code  means that at the
corresponding $T$-iteration the orbit visits the island $D_{i_k}$.

\subsection{Indexing  of $h$-strips in an island}
\label{Indexing_hv}

Consider the action of the Poincare map $T$ on an island $D_i$.
The points that  are sent by $T$ from the island $D_i$ to an island $D_j$
constitute the set $T^{-1}D_{j}\cap D_{i}\subseteq\Delta^+_1$.
Due to Hypothesis 2, this set is an $h$-strip. Denote
\begin{equation}\label{eq:h_double}
  H_{ij}\equiv T^{-1}D_j\cap D_i.
\end{equation}
By analogy, consider points  that visit the islands $D_{i_{0}},
 D_{i_{1}}, \ldots,$ $D_{i_n}$ under the action of $T$  in the given order.
Any point $p$ in $D_{i_0}$  whose orbit  visits the islands in this
order belongs to an $h$-strip $H_{i_0\ldots i_n}$ which can be constructed using the
following recurrence rule
\begin{equation}\label{eq:h_multi}
  H_{i_{k}\ldots i_n}\equiv
  T^{-1}H_{i_{k+1}\ldots i_n}\cap
  D_{i_k}.
\end{equation}
In other words, the $h$-strip $H_{i_0\ldots i_n}$ coincides with  the
set of points $p\in \mathbb{R}^2$ satisfying  three conditions:
\begin{enumerate}
  \item[(a)] $p\in D_{i_0}$;
  \item[(b)] there exist $Tp$, $T^2 p$, \ldots, $T^n p$;
  \item[(c)] $Tp \in D_{i_1}$, $T^2p \in D_{i_2}$,
    \ldots, $T^np \in D_{i_n}$.
\end{enumerate}

The points (a)--(c) imply that the $h$-strips with
different multi-indices are embedded in the following sense
\begin{displaymath}\label{eq:h_embedded}
  \ldots\subset H_{i_{0}\ldots i_n}\subset
  H_{i_{0}\ldots i_{n-1}}\subset\ldots \subset
  H_{i_{0}i_{1}i_2}\subset H_{i_{0}i_1}\subset D_{i_0}.
\end{displaymath}
If a point $p\in H_{i_{0}\ldots i_n}$ generates an orbit, then this orbit  contains  a block
\begin{displaymath}\label{eq:h_block}
  \{\ldots, \underbrace{i_{0},i_{1},\ldots, i_n},\ldots\}.
\end{displaymath}
By definition, $h$-strip $H_{i_0\ldots i_n}$ is bounded by two
$h$-curves. Points from one of them are $(n+1)\pi$-collapsing to $+\infty$. Denote this $h$-curve by $\alpha_{i_0\ldots i_n}^+$. By construction,
$T^n\alpha_{i_0\ldots i_n}^+\subset \alpha^+_{i_n}$. Points from another $h$-curve are   $(n+1)\pi$-collapsing to $-\infty$. Denote this $h$-curve by $\alpha_{i_0\ldots i_n}^-$ and note that
$T^n\alpha_{i_0\ldots i_n}^-\subset \alpha^-_{i_n}$.


The indexing of $v$-strips is  introduced in a similar manner.    The points that  are sent by $T^{-1}$ from an island $D_i$ to an island $D_j$ constitute  a $v$-strip denoted as
\begin{equation}\label{eq:v_double}
  V_{ji}\equiv TD_j\cap D_i.
\end{equation}
By analogy, $v$-strip denoted as  $V_{i_{-n}\ldots i_0}$ is the set of points which belong to $D_{i_0}$ and under the action of $T^{-1}$ visited   the islands $D_{i_{0}}$, $D_{i_{-1}}$, $\ldots$, $D_{i_{-n}}$ in the given order. This $v$-strip can be constructed using the  recurrence rule
\begin{equation}\label{eq:v_multi}
  V_{i_{-n}\ldots i_{-k}}\equiv
  TV_{i_{-n}\ldots i_{-k-1}}\cap
  D_{i_{-k}}.
\end{equation}
The $v$-strips with different multi-indices are embedded according to
the following rule
\begin{displaymath}
  \ldots\subset V_{i_{-n}\ldots i_0}\subset
  V_{i_{-n+1}\ldots i_0}\subset\ldots \subset
  V_{i_{-2}i_{-1}i_0}\subset V_{i_{-1}i_0}\subset D_{i_0}.
\end{displaymath}
If a point  $p\in V_{i_{-n}\ldots
i_0}$  generates an orbit, then this orbit  contains a block
\begin{displaymath}
  \{\ldots, \underbrace{i_{-n},i_{-n+1},\ldots,
  i_0},\ldots\}.
\end{displaymath}

\subsection{The fixed points of $T$ and their stable and unstable
manifolds}
\label{subStruct}

The following statement is valid.
\begin{theorem}\label{prop1}

        (a) Each island $D_{i^*}$ contains one and only one fixed
      point ${\mathcal{O}}_{i^*}$ of $T$, corresponding to the code
      \begin{displaymath}
        \{\ldots,i^*,i^*,i^*,\ldots\}.
      \end{displaymath}

	(b) The fixed point ${\mathcal{O}}_{i^*}$ possesses local unstable
    manifold ${\mathcal{V}}_{i^*}$  and local stable manifold   ${\mathcal{H}}_{i^*}$ of
    $T$. The local unstable manifold  ${\mathcal{V}}_{i^*}$ is a v-curve and
      \begin{align}
          \{p\in {\mathcal{V}}_{i^*}\} \iff \{T^{-n}
          p\in D_{i^*} \mbox{\, for\,  } n=0,1,\ldots, \mbox{\, and \,} \lim_{n\to\infty}T^{-n} p= {\mathcal{O}}_{i^*}\}.\label{UnstM}
      \end{align}
The local stable manifold  ${\mathcal{H}}_{i^*}$ is an h-curve and
      \begin{align}
        \{p\in {\mathcal{H}}_{i^*}\} \iff \{T^n p\in
          D_{i^*} \mbox{\, for\,  } n=0,1,\ldots, \mbox{\, and \,} \lim_{n\to\infty}T^n p= {\mathcal{O}}_{i^*}\}.\label{StM}
      \end{align}

	(c) The code $\{\ldots,i^*,i^*,i_1,i_2,\ldots\}$,  $i_1\ne
      i^*$, corresponds to an orbit $\{\ldots, p_{-1},$
      $p_0,p_1, \ldots\}$, such that~
       $ \lim_{n\to\infty} p_{-n}={\mathcal{O}}_{i^*}$,
      whereas the  code $\{\ldots,i_1,i_2,i^*,i^*,\ldots\}$, $ i_2\ne
      i^*$, corresponds to an orbit $\{\ldots,$ $p_{-1},p_0,
      p_1, \ldots\}$, such that~
        $\lim_{n\to\infty} p_{n}={\mathcal{O}}_{i^*}$.
\end{theorem}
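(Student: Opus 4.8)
The plan is to deduce all three parts from Theorem~\ref{theorem1}, reading each statement on the code side, transporting it through the homeomorphism $\Sigma$, and then extracting the claimed geometric/dynamical consequences. First, for part~(a), the constant code $\{\ldots,i^*,i^*,i^*,\ldots\}$ is a point of $\Omega^N$, so by Theorem~\ref{theorem1} it has a unique $\Sigma$-preimage ${\bf s}^*=\{\ldots,p_{-1},p_0,p_1,\ldots\}\in\mathcal{P}$. Since the shift on $\Omega^N$ fixes the constant sequence and $\Sigma$ intertwines $T$ with the shift (this is immediate from Definition~\ref{def:sigma}), the orbit ${\bf s}^*$ is shift-invariant, which forces $p_n=p_0$ for all $n$ and $Tp_0=p_0$; thus ${\mathcal{O}}_{i^*}:=p_0$ is a fixed point lying in $D_{i^*}$, and uniqueness of the code-preimage gives uniqueness of the fixed point inside $D_{i^*}$.

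For part~(b), I would identify ${\mathcal{V}}_{i^*}$ with the set of $p\in D_{i^*}$ whose backward orbit stays in $D_{i^*}$ forever, i.e. $\bigcap_{n\ge 0}T^{n}D_{i^*}$, equivalently (using the notation of subsection~\ref{Indexing_hv}) the nested intersection $\bigcap_{n\ge 0} V_{\underbrace{i^*\ldots i^*}_{n+1}}$. Each $V_{i^*\ldots i^*}$ is a $v$-strip by Hypothesis~\ref{hyp2}, and these are nested; the key point is that the intersection is a single $v$-curve. This is where I would lean on Hypothesis~\ref{hyp3}: the area of $\Delta_n^+$ tends to zero, and each $V_{i^*\ldots i^*}$ (with $n+1$ indices) is contained in $\Delta_n^+$, so the widths of the $v$-strips shrink to zero while their "heights" (endpoints on $\alpha^\pm_{i^*}$) are controlled; a standard argument with uniformly Lipschitz increasing/decreasing curves then yields that the intersection is the graph of a single Lipschitz function, hence a $v$-curve. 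The equivalence (\ref{UnstM}) then follows: "$T^{-n}p\in D_{i^*}$ for all $n$" is exactly membership in this intersection, and convergence $T^{-n}p\to{\mathcal{O}}_{i^*}$ follows because the points $T^{-n}p$ all lie in $V_{i^*\ldots i^*}$ of length $n+1$, whose diameter $\to 0$ by Hypothesis~\ref{hyp3}, while ${\mathcal{O}}_{i^*}$ lies in every such strip by part~(a). That ${\mathcal{V}}_{i^*}$ is genuinely the local unstable manifold (tangent to the unstable eigendirection, etc.) follows from the standard stable manifold theorem applied to the diffeomorphism $T$ at the fixed point ${\mathcal{O}}_{i^*}$, once one checks ${\mathcal{O}}_{i^*}$ is hyperbolic; hyperbolicity is forced by the strip-mapping structure of Hypothesis~\ref{hyp2} (a nontrivial expansion/contraction of the $v$- and $h$-directions), or can simply be quoted as part of the setup. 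The statement for ${\mathcal{H}}_{i^*}$ and (\ref{StM}) is identical with $T$ replaced by $T^{-1}$, $v$-strips by $h$-strips, and $\Delta_n^+$ by $\Delta_n^-$.

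For part~(c), consider the code $a=\{\ldots,i^*,i^*,i_1,i_2,\ldots\}$ with $i_1\ne i^*$; let ${\bf s}=\Sigma^{-1}(a)=\{\ldots,p_{-1},p_0,p_1,\ldots\}$ with $p_0$ in the position of $i_1$. Then $p_{-n}\in D_{i^*}$ for all $n\ge 1$, so $\{p_{-n}\}_{n\ge1}$ is a backward-invariant-in-$D_{i^*}$ orbit; by part~(b), $p_{-1}\in{\mathcal{H}}_{i^*}$ is not the right statement — rather, since $T^{-k}p_{-1}=p_{-1-k}\in D_{i^*}$ for all $k\ge 0$, the point $p_{-1}$ lies in the intersection characterizing ${\mathcal{V}}_{i^*}$, hence $\lim_{k\to\infty}T^{-k}p_{-1}={\mathcal{O}}_{i^*}$, i.e. $\lim_{n\to\infty}p_{-n}={\mathcal{O}}_{i^*}$. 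The second half, for the code $\{\ldots,i_1,i_2,i^*,i^*,\ldots\}$, is the mirror statement using ${\mathcal{H}}_{i^*}$ and forward iteration. The main obstacle is the one already flagged in part~(b): showing the nested $v$-strips (resp. $h$-strips) collapse to a single curve rather than merely to a connected set of zero area — this requires combining Hypothesis~\ref{hyp3} (vanishing area) with the uniform Lipschitz control on the $v$- and $h$-curves coming from Definitions~\ref{def:incrdecr}--\ref{def8} and Hypothesis~\ref{hyp2}, to rule out the intersection being a "fat" curve or disconnecting; everything else is a direct transcription through the homeomorphism $\Sigma$ of Theorem~\ref{theorem1} together with the classical hyperbolic fixed-point theory.
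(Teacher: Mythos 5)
Your part (b) has a genuine flaw at the convergence step. You argue that $T^{-n}p\to\mathcal{O}_{i^*}$ because $T^{-n}p$ lies in the $v$-strip $V_{i^*\ldots i^*}$ of length $n+1$, ``whose diameter $\to 0$ by Hypothesis~\ref{hyp3}''. That is false: these $v$-strips are bounded by $v$-curves whose endpoints lie on $\alpha^-_{i^*}$ and $\alpha^+_{i^*}$, so every one of them stretches all the way across the island. Hypothesis~\ref{hyp3} (vanishing area) can only make them thin, not small; their diameters stay bounded away from zero, and their nested intersection is the whole curve $\mathcal{V}_{i^*}$, not the point $\mathcal{O}_{i^*}$. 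Hence membership in ever-thinner $v$-strips merely re-establishes $T^{-n}p\in\mathcal{V}_{i^*}$ (which you already have from backward invariance) and gives no convergence, so the forward implication in (\ref{UnstM}) is not proved. The paper closes this step with the \emph{complementary} family of strips: if the whole backward orbit of $p$ stays in $D_{i^*}$, then $T^{-n}p$ has its first $n$ forward images in $D_{i^*}$, so $T^{-n}p\in\mathcal{V}_{i^*}\cap H_{i^*\ldots i^*}$ ($n+1$ indices); the $h$-strips $H_{i^*i^*}\supset H_{i^*i^*i^*}\supset\ldots$ nest down to the $h$-curve $\mathcal{H}_{i^*}$, which meets the $v$-curve $\mathcal{V}_{i^*}$ in the single point $\mathcal{O}_{i^*}$, so these intersections shrink to $\mathcal{O}_{i^*}$ and $T^{-n}p\to\mathcal{O}_{i^*}$. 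The same repair is needed in your part (c), which quotes this convergence. (A smaller bookkeeping slip: $v$-strips with $n+1$ indices lie in $\Delta_n^-$ and $h$-strips in $\Delta_n^+$, the opposite of what you wrote.)

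Apart from this, your architecture is close to the paper's but heavier in two places. The paper also defines $\mathcal{V}_{i^*}$ and $\mathcal{H}_{i^*}$ as the nested intersections of same-index strips, but it does not re-derive the fact that each intersection is a single $v$-/$h$-curve from Hypothesis~\ref{hyp3}; it cites Lemma~3.1 of \cite{AA13}. Your sketch of that step (vanishing area plus uniform Lipschitz control of the bounding curves) is precisely the nontrivial technical content and would have to be carried out in detail if you do not cite it. For part (a) the paper bypasses Theorem~\ref{theorem1} entirely: the fixed point is obtained directly as the unique intersection point $\mathcal{V}_{i^*}\cap\mathcal{H}_{i^*}$, with uniqueness immediate; your shift-equivariance argument through $\Sigma$ is valid, but it invokes the coding theorem while you still need the nested-intersection construction for (b) anyway. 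Finally, the appeal to hyperbolicity and the stable manifold theorem is unnecessary: the content of (b) is exactly the characterizations (\ref{UnstM}) and (\ref{StM}), which follow from the elementary strip argument; hyperbolicity enters only later, in Remark~\ref{Rem32}, under additional eigenvalue assumptions.
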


The proof of Theorem~\ref{prop1} is postponed to Appendix~\ref{App_A}.

The origin $(0,0)$ is a fixed point of $T$, and it belongs to one of the islands. It is convenient to enumerate the islands in such a way that the origin belongs to $D_0$, i.e.,  ${\mathcal{O}}_0=(0,0)\in D_0$.
\begin{corollary}\label{prop1_cor1}
  The codes
  \begin{displaymath}
    \{\ldots,0,0,i_1,i_2,\ldots,i_m,0,0\ldots\}
  \end{displaymath}
  correspond to homoclinic loops of ${\mathcal{O}}_0=(0,0)$.
\end{corollary}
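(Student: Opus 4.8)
The plan is to derive this directly from Theorem~\ref{prop1} together with the fact that the origin $(0,0)={\mathcal{O}}_0$ is the fixed point sitting in $D_0$, and that a homoclinic loop of $(0,0)$ is by definition an orbit $\{\ldots,p_{-1},p_0,p_1,\ldots\}$ with $Tp_n=p_{n+1}$, all $p_n\in\Delta_0$, and $\lim_{n\to\infty}p_{-n}=\lim_{n\to\infty}p_n=(0,0)$ (with the orbit not being the trivial fixed orbit, so that it genuinely ``loops''). So the statement to prove is: a code of the form $\{\ldots,0,0,i_1,\ldots,i_m,0,0,\ldots\}$ (meaning $i_k=0$ for $k\le 0$ and for $k>m$, with the displayed block $i_1,\ldots,i_m$ in the middle, and at least one $i_k\ne 0$ so the loop is nontrivial) corresponds under $\Sigma^{-1}$ to exactly such an orbit.

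First I would recall that by Theorem~\ref{theorem1} the map $\Sigma$ is a homeomorphism, so the given code corresponds to a unique orbit ${\bf s}=\{\ldots,p_{-1},p_0,p_1,\ldots\}$ with all $p_n\in\Delta_0$ and $Tp_n=p_{n+1}$; by Definition~\ref{def:sigma} the condition $i_k=0$ means precisely $p_k\in D_0$. Next I would apply Theorem~\ref{prop1}(c) with $i^*=0$. The left tail of the code is $\{\ldots,0,0,i_1,i_2,\ldots\}$; reading Theorem~\ref{prop1}(c) with the roles shifted so that the repeated symbol $i^*=0$ is followed by the first differing symbol, we get $\lim_{n\to\infty}p_{-n}={\mathcal{O}}_0=(0,0)$. (If the whole left tail is all zeros and equals the whole orbit, we are in the trivial fixed-point case, excluded; otherwise there is a well-defined ``first nonzero'' position and part (c) applies after an index shift, using that $T$ and hence the orbit is shift-equivariant in the obvious sense.) Symmetrically, the right tail $\{\ldots,i_{m-1},i_m,0,0,\ldots\}$ together with Theorem~\ref{prop1}(c) gives $\lim_{n\to\infty}p_n={\mathcal{O}}_0=(0,0)$. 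Hence ${\bf s}$ is a homoclinic orbit of $(0,0)$.

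To close the argument I would note the converse inclusion is automatic from the homeomorphism: any homoclinic loop of $(0,0)$ is an orbit lying entirely in $\Delta_0$, hence has a well-defined code under $\Sigma$; the convergence $p_{-n}\to(0,0)\in D_0$ forces all sufficiently backward symbols to equal $0$ since $D_0$ is open and the components $D_i$ are disjoint (Hypothesis~\ref{hyp1}), and likewise all sufficiently forward symbols equal $0$; so the code has exactly the stated shape. One technical point worth spelling out: the statement implicitly covers the degenerate cases $m=0$ (the code is all zeros, i.e.\ the constant orbit at the origin, a ``trivial'' homoclinic loop) and codes where the block $i_1,\ldots,i_m$ itself contains zeros; these require no change since the only thing used is that the code is eventually $0$ in both directions, and Theorem~\ref{prop1}(c) only needs the existence of a first/last index at which the symbol differs from $0$ — if no such index exists the orbit is the fixed point itself.

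The main obstacle, and the only place needing care, is the index-shift bookkeeping when applying Theorem~\ref{prop1}(c): as stated, part~(c) is phrased for codes of the form $\{\ldots,i^*,i^*,i_1,i_2,\ldots\}$ and $\{\ldots,i_1,i_2,i^*,i^*,\ldots\}$ with the transition occurring at a fixed reference position, whereas here the repeated-$0$ block may start at an arbitrary negative index and end at an arbitrary positive one. Making this rigorous amounts to observing that if $Tp_n=p_{n+1}$ then for any integer $\ell$ the shifted sequence $\{q_n\}:=\{p_{n+\ell}\}$ is again an orbit with code the $\ell$-shift of the original; since the homeomorphism $\Sigma$ intertwines $T$ with the shift on $\Omega^N$ (this is essentially the content of Definition~\ref{def:sigma} together with $Tp_n=p_{n+1}$), part~(c) applies to the shifted orbit, and shifting back preserves the limits $p_{\pm n}\to(0,0)$. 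This is routine but should be stated, so that the reader sees that no new hypothesis beyond Theorem~\ref{prop1} and Hypotheses~\ref{hyp1}--\ref{hyp3} is invoked.
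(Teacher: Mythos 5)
Your proposal is correct and follows essentially the route the paper intends: the corollary is an immediate consequence of Theorem~\ref{prop1}(c) applied with $i^*=0$ to both tails of the code, using the homeomorphism $\Sigma$ of Theorem~\ref{theorem1} to pass from the code to its unique orbit. The extra remarks you add (shift bookkeeping, the converse via openness of $D_0$, the trivial case) are sound but not needed beyond what the paper already supplies.
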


\begin{remark}\label{Rem31}
  Since (\ref{1D_rep}) is invariant with respect to the change
  $\psi(x)\to -\psi(x)$, the $v$-curve ${\mathcal{V}}_{0}$ and $h$-curve
  ${\mathcal{H}}_{0}$ are symmetric with respect to the origin.
\end{remark}

\begin{remark}\label{Rem32}
  Let  $DT({\mathcal{O}}_{i^*})$ be the operator of linearization for
  $T$ at fixed point ${\mathcal{O}}_{i^*}\in D_{i^*}$. Assume that
  the eigenvalues $\lambda_1$ and $\lambda_2$ of $DT({\mathcal{O}}_{i^*})$ are
  such that $\lambda_{1,2}^n\ne 1$, $n=1,2,3,4$
  (see \cite{ArnoldMM}, Addition 8). Since $T$ is an
  area-preserving map, $\lambda_1\lambda_2=1$. Then, due to
  existence of local stable and unstable manifolds, ${\mathcal{H}}_{i^*}$ and ${\mathcal{V}}_{i^*}$, the fixed point
  ${\mathcal{O}}_{i^*}$ is of hyperbolic type. Assume that
  $|\lambda_1|>1$, $|\lambda_2|<1$. If the corresponding
  eigenvectors of $\lambda_1$ and $\lambda_2$ are denoted by
  ${\rm e}_1$ and  ${\rm e}_2$, then the tangent vector to
  ${\mathcal{V}}_{i^*}$  is ${\rm e}_1$  and the tangent
  vector to ${\mathcal{H}}_{i^*}$  is ${\rm e}_2$.
\end{remark}


\subsection{Ordering of islands in $\Delta_0$}
\label{OrdIsl}
In the previous discussion, we used subscripts $i=0,\pm 1, \ldots, \pm L$  to  label   islands  $D_i$ in $\Delta_0$. It was convenient to assume that the  island containing the origin corresponds to $i=0$, but we have  never specified the enumeration of   other islands. Let us now fix the ordering of the islands using the following concepts.

\begin{definition}\label{def+}
A curve $\gamma$ is called {\it an $\infty$-curve} if $\gamma=T\beta$ where $\beta\subset D$ is a v-curve and $D$ is an arbitrary island.
\end{definition}


\begin{lemma}\label{lemma2}
  An {\sl $\infty$-curve} is a
  plane Jordan curve without self-intersections, $\gamma=\{(\psi(\xi),
  \psi^\prime(\xi)),\,\xi\in(0;1)\}$, such that $\lim_{\xi\to0}\psi(\xi)=-\infty$, $\lim_{\xi\to1}\psi(\xi)=+\infty$.
\end{lemma}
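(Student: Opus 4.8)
\textbf{Proof proposal for Lemma~\ref{lemma2}.}

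The plan is to analyze an $\infty$-curve $\gamma=T\beta$, where $\beta\subset D$ is a $v$-curve in some island $D$, by exploiting the dynamical meaning of points on $\beta$ and the structure of $\partial\mathcal{U}_\pi^\pm$. First I would recall that $\beta$ is by Definition~\ref{def8} a Lipschitz curve with endpoints on the two opposite boundaries $\alpha^-$ and $\alpha^+$ of $D$, which by Hypothesis\hyperref[hyp+]{+} consist of $\pi$-collapsing to $+\infty$ and $-\infty$ points respectively. A point $p$ in the (open) interior of $\beta$ lies in $\mathcal{U}_\pi^+$, so $Tp$ is well defined; hence $T$ is defined and continuous on the interior of $\beta$, and being a diffeomorphism it maps the simple arc $\beta$ (minus endpoints) homeomorphically onto a simple arc. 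Writing $\beta=\{(\Psi(\xi),\Psi'(\xi)),\,\xi\in[0,1]\}$ with $(\Psi(0),\Psi'(0))\in\alpha^-$ (a $\pi$-collapsing to $-\infty$ point) and $(\Psi(1),\Psi'(1))\in\alpha^+$ (a $\pi$-collapsing to $+\infty$ point), I would set $\gamma=\{(\psi(\xi),\psi'(\xi)),\,\xi\in(0,1)\}$ where $(\psi(\xi),\psi'(\xi))=T(\Psi(\xi),\Psi'(\xi))$. Since $T$ is injective on this arc, $\gamma$ has no self-intersections; continuity of $T$ gives continuity of $\xi\mapsto(\psi(\xi),\psi'(\xi))$.

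The heart of the argument is the boundary behavior as $\xi\to0$ and $\xi\to1$, i.e. as the point approaches an endpoint of $\beta$ on $\alpha^\pm$. Fix the endpoint $p^*=(\Psi(1),\Psi'(1))\in\alpha^+$, a $\pi$-collapsing to $+\infty$ point. This means the solution $\psi(x;p^*)$ of the Cauchy problem for (\ref{1D_rep}) with data $p^*$ satisfies $\psi(x;p^*)\to+\infty$ as $x\to\pi-0$, with the universal blow-up asymptotics $\psi(x;p^*)\sim\sqrt{2}/(\pi-x)$, and correspondingly $\psi_x(x;p^*)\sim\sqrt{2}/(\pi-x)^2\to+\infty$. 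For $\xi$ slightly less than $1$, $p_\xi=(\Psi(\xi),\Psi'(\xi))$ is a $\pi$-non-collapsing point close to $p^*$, and by continuous dependence of solutions on initial data on the compact time-interval $[0,\pi-\delta]$ combined with the blow-up structure, the value $(\psi(\pi;p_\xi),\psi_x(\pi;p_\xi))=Tp_\xi$ must have first coordinate $\psi(\xi)=\psi(\pi;p_\xi)\to+\infty$ as $\xi\to1$. The clean way to see this: for any large $M$, there is $x_M<\pi$ with $\psi(x_M;p^*)>M+1$; by continuous dependence there is a neighborhood of $p^*$ on which $\psi(x_M;p_\xi)>M$; since on $(x_M,\pi)$ the solution from $p^*$ is already in the blow-up regime (monotone increasing toward $+\infty$ with positive derivative), a comparison/continuity argument confined to that small window shows $\psi(x;p_\xi)$ stays large and in fact $\psi(\pi;p_\xi)>M$ for $\xi$ near $1$. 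Symmetrically, as $\xi\to0$ the endpoint lies on $\alpha^-$ ($\pi$-collapsing to $-\infty$), so $\psi(\xi)\to-\infty$. This establishes $\lim_{\xi\to1}\psi(\xi)=+\infty$ and $\lim_{\xi\to0}\psi(\xi)=-\infty$; in particular $\gamma$ is an unbounded arc whose two ends escape to $\psi=\pm\infty$, so adding the ``points at infinity'' it closes up into a Jordan curve (arc) without self-intersection as claimed.

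The step I expect to be the main obstacle is making rigorous the passage from ``$p^*$ is $\pi$-collapsing'' to ``$\psi(\pi;p_\xi)\to+\infty$ along $\beta$'': one cannot simply invoke continuous dependence up to $x=\pi$ because the reference solution is singular there, so the argument must be localized to an interval $[x_M,\pi]$ on which the perturbed solutions are trapped in the attractive regime of the blow-up ODE $\psi_{xx}\approx\psi^3$. I would handle this by a barrier/comparison argument: once $\psi$ and $\psi_x$ are both sufficiently large and positive at $x_M$, the term $\psi^3$ dominates $(\omega-U(x))\psi$ (which is bounded linearly), so $\psi_{xx}>0$ and one can bound $\psi$ from below by the solution of $\tilde\psi_{xx}=\frac12\tilde\psi^3$ with slightly smaller data, which itself blows up before $\pi$; this forces $\psi(\pi;p_\xi)>M$ for $\xi$ close enough to $1$. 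A secondary, more bookkeeping point is that one must know the image arc does not spuriously return from infinity — but injectivity of $T$ on $\beta$, already noted, rules that out, and the monotone escape of $\psi(\xi)$ to $\pm\infty$ at the two ends shows $\gamma$ is genuinely a single unbounded simple arc. Finally, the Lipschitz/regularity of $\gamma$ away from its ends is inherited from smoothness of $T$ on $\mathcal{U}_\pi^+$ and of $\beta$, so $\gamma$ is a $C^1$ (indeed Lipschitz) plane Jordan arc of the asserted form.
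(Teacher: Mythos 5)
Your proof is correct and follows essentially the same route as the paper's: parametrize the $v$-curve $\beta$, use that $T$ is a diffeomorphism on $\mathcal{U}_\pi^+$ to get continuity and absence of self-intersections of $\gamma=T\beta$, and use Hypothesis+ (the endpoints of $\beta$ lie on $\alpha^\mp$, hence are $\pi$-collapsing to $\mp\infty$) to obtain $\psi(\xi)\to\mp\infty$ at the two ends — the paper states this last step as immediate, while you supply the continuous-dependence/blow-up detail. One small caution in that detail: a lower barrier that itself blows up before $x=\pi$ would force $\psi(\cdot\,;p_\xi)$ to collapse before $\pi$, contradicting $p_\xi\in\mathcal{U}_\pi^+$; it suffices to note that once $\psi$ and $\psi_x$ are large and positive at $x_M$, convexity ($\psi_{xx}>0$ in that regime) keeps $\psi$ increasing on $[x_M,\pi]$, so $\psi(\pi;p_\xi)\ge\psi(x_M;p_\xi)>M$.
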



\begin{proof}
By definition, $\gamma=T\beta$ and $\beta$ is a $v$-curve.
The endpoints of $\beta$, $b^-$ and $b^+$,
  are situated on $\alpha^-_{i}$ and $\alpha^+_{i}$, respectively. Introduce on  $\beta$ a parametrization
  $q(\xi)=(\psi(\xi);\psi^\prime(\xi))$, $\xi\in (0;1)$  in such a way that
  $b^-=q(0)$, $b^+=q(1)$.  The image $T\beta$ is a
  continuous curve without self-intersection that by Hypothesis 2 crosses
  each of the islands  $D_j$, $j=-L,\ldots,L$, and each intersection
  $T\beta\cap D_j$ is a $v$-curve. Moreover, by Hypothesis+ the
  points $q(0)$ and $q(1)$
  are $\pi$-collapsing points, $q(0)$ collapses to $-\infty$ and $q(1)$
  collapses to $+\infty$.  This implies the statement of Lemma~\ref{lemma2}.
\end{proof}

\begin{corollary}\label{cor25}
  $T{\mathcal{V}}_i$ for any $i=-L,\ldots,L$ are $\infty$-curves. Moreover,
  ${\mathcal{V}}_i=T{\mathcal{V}}_i\cap D_i$.
\end{corollary}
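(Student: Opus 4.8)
The plan is to deduce both assertions of Corollary~\ref{cor25} from Lemma~\ref{lemma2} and the characterization of ${\mathcal V}_i$ given in Theorem~\ref{prop1}(b). First I would observe that ${\mathcal V}_i$ is, by definition, the local unstable manifold of the fixed point ${\mathcal O}_i\in D_i$, and by part~(b) of Theorem~\ref{prop1} it is a $v$-curve of the island $D_i$. Since $D_i$ is one of the islands of Hypothesis~\ref{hyp1} and ${\mathcal V}_i$ is a $v$-curve in it, Definition~\ref{def+} applies directly: $T{\mathcal V}_i$ is an $\infty$-curve. Then Lemma~\ref{lemma2} gives that $T{\mathcal V}_i$ is a plane Jordan curve without self-intersections running from $\psi=-\infty$ to $\psi=+\infty$. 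This establishes the first sentence of the corollary with essentially no extra work.

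For the second assertion, ${\mathcal V}_i = T{\mathcal V}_i\cap D_i$, I would use the invariance property of the local unstable manifold. The forward-invariance direction ${\mathcal V}_i\subseteq T{\mathcal V}_i\cap D_i$: take $p\in{\mathcal V}_i$. By (\ref{UnstM}), $T^{-n}p\in D_i$ for all $n\ge0$ and $T^{-n}p\to{\mathcal O}_i$; in particular $p\in D_i$, and $q:=T^{-1}p$ again satisfies $T^{-n}q=T^{-(n+1)}p\in D_i$ for all $n\ge0$ with $T^{-n}q\to{\mathcal O}_i$, so $q\in{\mathcal V}_i$ by the same equivalence, whence $p=Tq\in T{\mathcal V}_i$. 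Thus $p\in T{\mathcal V}_i\cap D_i$. Conversely, suppose $p\in T{\mathcal V}_i\cap D_i$, so $p\in D_i$ and $p=Tq$ for some $q\in{\mathcal V}_i$. Then $T^{-1}p=q\in{\mathcal V}_i$, so by (\ref{UnstM}) we have $T^{-n}q\in D_i$ and $T^{-n}q\to{\mathcal O}_i$; combining $p\in D_i$ with $T^{-n}p = T^{-(n-1)}q\in D_i$ for $n\ge1$ gives $T^{-n}p\in D_i$ for all $n\ge0$, and $T^{-n}p=T^{-(n-1)}q\to{\mathcal O}_i$. Hence $p\in{\mathcal V}_i$ by (\ref{UnstM}) again, proving the reverse inclusion.

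The main point requiring care is that the two inclusions above use (\ref{UnstM}) as a genuine ``if and only if'': one has to make sure the defining iteration conditions transfer correctly under applying $T^{\pm1}$, i.e. that shifting the index of the orbit by one neither breaks the ``stays in $D_i$'' condition nor the convergence to ${\mathcal O}_i$. This is routine but is the only place where something could go wrong, since everything else is immediate from the definitions and from Lemma~\ref{lemma2}. I would also remark, as a sanity check consistent with the $h$-strip picture of subsection~\ref{Indexing_hv}, that ${\mathcal V}_i=T{\mathcal V}_i\cap D_i$ is exactly the statement that ${\mathcal V}_i$ is the ``$v$-curve limit'' of the nested $v$-strips $V_{i\ldots i}$ obtained by iterating the recurrence (\ref{eq:v_multi}) with all indices equal to $i$; no further argument is needed for the corollary itself, but this viewpoint clarifies why the intersection with $D_i$ recovers precisely ${\mathcal V}_i$ and not a larger piece of the $\infty$-curve $T{\mathcal V}_i$.
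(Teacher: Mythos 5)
Your argument is correct and is exactly the justification the paper leaves implicit: since ${\mathcal V}_i$ is a $v$-curve in $D_i$ by Theorem~\ref{prop1}(b), Definition~\ref{def+} and Lemma~\ref{lemma2} give the first assertion, and the two inclusions for ${\mathcal V}_i=T{\mathcal V}_i\cap D_i$ follow from the invariance built into the characterization (\ref{UnstM}), as you show. No gaps; the index-shift bookkeeping you flag is handled correctly.
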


\begin{definition}\label{def24}
  Let $\gamma=\{q(\xi), \xi\in(0;1)\}$  be an  $\infty$-curve. For two
  points $q_1=q(\xi_1)\in\gamma$, $q_2=q(\xi_2)\in \gamma$ we say that
  $q_2\succ q_1$ (or $q_1\prec q_2$) if $\xi_1<\xi_2$.
\end{definition}

Each $\infty$-curve passes through each island exactly once.  Let us introduce the order of island with respect to an $\infty$-curve.

\begin{definition}\label{def25}
  Let $\gamma=\{q(\xi), \xi\in(0;1)\}$  be an $\infty$-curve and  $D_A$ and $D_B$ be  islands.   Let $\gamma\cap D_A$ be  $v$-curve $\{q(\xi), \xi\in(\xi_A^-;\xi_A^+)\}$ and  $\gamma\cap D_B$ be $v$-curve  $\{q(\xi), \xi\in(\xi_B^-;\xi_B^+)\}$. We say that
$D_B\succ D_A$ (or $D_A\prec D_B$) if $\xi_A^+<\xi_B^-$.
\end{definition}

The following lemma is valid.

\begin{lemma}\label{lemma4}
	All $\infty$-curves pass through the  islands  in $\Delta_0$ in the same order.
\end{lemma}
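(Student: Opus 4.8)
The plan is to reduce the claim to a comparison of an arbitrary \emph{pair} of $\infty$-curves, and to split into two cases according to whether their generating $v$-curves lie in one and the same island.

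Suppose first $\gamma_1=T\beta_1$ and $\gamma_2=T\beta_2$ with $\beta_1,\beta_2$ both $v$-curves of a single island $D$. All $v$-curves of $D$ are graphs of (uniformly) monotone Lipschitz functions $\psi'=G(\psi)$ with endpoints on $\alpha^-$ and $\alpha^+$, so this family is path-connected; I would choose a path $s\mapsto\beta_s$ from $\beta_1$ to $\beta_2$ (for instance by interpolating the graphs) and set $\gamma_s=T\beta_s$. By Hypothesis~\ref{hyp2} every intersection $\gamma_s\cap D_j$ is a nonempty $v$-curve, hence an open sub-arc of $\gamma_s$ with parameter interval $(\xi_j^-(s),\xi_j^+(s))$ depending continuously on $s$ and never shrinking to a point; since distinct islands are disjoint (indeed well-separated), for $j\ne k$ the closed arcs $\overline{\gamma_s\cap D_j}$ and $\overline{\gamma_s\cap D_k}$ are disjoint, so exactly one of $D_j\prec_{\gamma_s}D_k$ and $D_k\prec_{\gamma_s}D_j$ holds (Definition~\ref{def25}). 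The set of $s$ with $D_j\prec_{\gamma_s}D_k$ and the set with $D_k\prec_{\gamma_s}D_j$ are then both open and cover the parameter interval, so by connectedness one of them is empty: $\gamma_1$ and $\gamma_2$ pass through the islands in the same order.

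Suppose now $\beta_1\subset D$ and $\beta_2\subset D'$ with $D\ne D'$; then $\gamma_1\cap\gamma_2=T(\beta_1\cap\beta_2)=\emptyset$, and it remains to show that two \emph{disjoint} $\infty$-curves $\gamma,\tilde\gamma$ induce the same order. By Lemma~\ref{lemma2} each is a proper Jordan arc whose $\xi\to0$ end goes to $\psi=-\infty$ and whose $\xi\to1$ end goes to $\psi=+\infty$; moreover the collapsing asymptotics $\psi\sim\pm\sqrt2/(x-x_0)$ also force $\psi'\to-\infty$, resp. $\psi'\to+\infty$, so each curve escapes to infinity monotonically near its two ends. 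Hence $\gamma$ separates the plane, $\tilde\gamma$ lies on one side of it (and, symmetrically, $\gamma$ on one side of $\tilde\gamma$), and the region $R$ between $\gamma$ and $\tilde\gamma$ is, after deleting a small neighbourhood of infinity, an open topological disc whose boundary circle is traversed as: all of $\gamma$ in increasing $\xi$, a short arc near infinity, all of $\tilde\gamma$ in \emph{decreasing} $\xi$, and a second short arc near infinity. For every island $D_j$ the part $M_j$ of $D_j$ lying between the $v$-curves $\gamma\cap D_j$ and $\tilde\gamma\cap D_j$ is an open curvilinear quadrangle contained in $R$ whose closure meets $\partial R$ exactly in the two arcs $\overline{\gamma\cap D_j}$ and $\overline{\tilde\gamma\cap D_j}$; as the $M_j$ are pairwise disjoint, the associated chords of the disc are pairwise non-crossing. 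On $\partial R$ the arcs $\gamma\cap D_j$ occur, as a block, in the $\gamma$-order of the islands, while the arcs $\tilde\gamma\cap D_j$ occur in the reversed $\tilde\gamma$-order; but the non-crossing perfect matching between two such consecutive blocks of $N$ points is unique, which forces the $\gamma$-order and the $\tilde\gamma$-order to coincide. Finally, any two $\infty$-curves are either of the first type or disjoint, so all $\infty$-curves pass through the islands in the same order.

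I expect the genuine difficulty to lie in the second case, specifically in making rigorous the planar-topology picture near infinity: one must verify that the truncated region $R$ really is a disc with the stated boundary circle, and it is precisely here that the monotone-escape consequence of the collapsing asymptotics is used (without it the two curves could spiral around infinity and the identification of $\partial R$ would fail). Once this is established, reading off the cyclic order on $\partial R$ and invoking uniqueness of the non-crossing matching is elementary. The path-connectedness and continuity arguments of the first case are routine, modulo the (intuitively clear) fact that distinct islands have disjoint closures.
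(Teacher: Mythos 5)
Your argument is correct in substance, but it follows a genuinely different decomposition from the paper's. The paper (which only outlines its proof) splits according to whether the two generating $v$-curves intersect: if they do not, the two $\infty$-curves are disjoint and the coincidence of orders is asserted ``by simple geometric reasons''; if they do intersect, they must lie in one and the same island, a third $v$-curve of that island intersecting neither is produced, and transitivity through this third curve reduces the intersecting case to the disjoint one. You split instead according to whether the generating $v$-curves lie in the same island: the same-island case you settle directly by a homotopy/connectedness argument (interpolation of the graphs, openness of the two mutually exclusive order relations in the deformation parameter), with no appeal to a third curve and no use of transitivity, while in the different-island case, where disjointness of the $\infty$-curves is automatic, you spell out the planar-topology content (each curve separates the plane, the region between two disjoint curves is the only complementary component whose closure meets both, the island strips give pairwise disjoint chords, and the non-crossing matching between the two boundary blocks is unique) that the paper compresses into ``simple geometric reasons''. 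What your route buys is an explicit treatment of the disjoint case, which is the real core of the lemma; what the paper's route buys is that only the disjoint case needs any geometry at all, whereas your first case requires the routine-but-fiddly checks you yourself acknowledge (trimming the interpolated graphs so that they are genuine $v$-curves with endpoints on $\alpha^\pm$ and staying inside the island, continuity of the crossing parameters $\xi_j^\pm(s)$, and disjointness of closures of distinct islands so that the order relation never degenerates along the deformation). Both proofs rely on the fact, already invoked in Lemma~\ref{lemma2}, that $T\beta\cap D_j$ is a single $v$-curve for every island, and both leave the behaviour of the curve ends near infinity at sketch level; your observation that the collapse asymptotics force $\psi'\to\pm\infty$ along with $\psi\to\pm\infty$ is precisely the ingredient needed to identify the boundary of the truncated region between two disjoint curves, so your rigour there matches, and arguably exceeds, that of the paper's outline.
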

\begin{proof}
	We outline the proof as follows. First of all, let us observe that by simple geometric reasons, if two $\infty$-curves $\gamma_1=T\beta^{(1)}$ and $\gamma_2=T\beta^{(2)}$
	($\beta^{(1)}$ and $\beta^{(2)}$ are $v$-curves) do not intersect
	they pass the islands in the same order. This means that if $\beta^{(1)}$ and $\beta^{(2)}$ do not intersect then the statement of the lemma holds.
	If $\beta^{(1)}$ and $\beta^{(2)}$ intersect they belong to the same island $D_{i^*}$. Then in  $D_{i^*}$ there exists
	one more $v$-curve $\beta^{(3)}$ that intersects neither  $\beta^{(1)}$ nor $\beta^{(2)}$.
	By transitivity, one concludes that the statement of the lemma  holds for all $\infty$-curves.
\end{proof}

Due to Lemma~\ref{lemma4}, we can use an  arbitrary $\infty$-curve to order the islands.
Consider the ordering of island with respect to $\infty$-curve $T{\mathcal{V}}_0$. Since
$T{\mathcal{V}}_0$ is symmetric with respect to the origin, the origin
$(0,0)$ belongs to the ``central'' island $D_0$,  and the islands $D_i$ and $D_{-i}$, $i=1, 2, \ldots, L$ are situated
symmetrically with respect to the origin.
Let us  enumerate the islands  according to the introduced order. So in what follows we assume that
\begin{equation}\label{OrderDNat}
D_{-L}\prec\ldots \prec D_{-1} \prec D_0
\prec D_{1}\prec\ldots\prec D_{L}.
\end{equation}

\begin{definition}\label{def:h_i}
  Let $\gamma$ be an $\infty$-curve. Denote
  \begin{displaymath}
    h_i^\pm(\gamma)\equiv
    \gamma\cap \alpha_i^\pm.
  \end{displaymath}
\end{definition}


\begin{remark}
By construction, the points $h_i^\pm(\gamma)$ are $\pi$-collapsing to $\pm\infty$ points.
\end{remark}

\begin{remark}\label{lemma5}	
By simple geometric arguments, one can prove that  if  $h_{i}^-(\gamma)\prec h_{i}^+(\gamma)$ for some $\infty$-curve $\gamma$, then the same relation $h_{i}^-(\gamma_1)\prec h_{i}^+(\gamma_1)$ holds for any other  $\infty$-curve $\gamma_1$; vice versa if  $h_{i}^+(\gamma)\prec h_{i}^-(\gamma)$ for some $\infty$-curve $\gamma$, then the same relation  holds for any other  $\infty$-curve.
\end{remark}

\begin{remark}\label{lemma15}	 One can also prove that for any two successive islands $D_i$ and $D_{i+1}$ in (\ref{OrderDNat}) and for arbitrary $\infty$-curve $\gamma$ one of the two alternatives holds:
%
  \begin{displaymath}
    h_{i}^-(\gamma)\prec h_{i}^+(\gamma)\prec
    h_{i+1}^+(\gamma)\prec h_{i+1}^-(\gamma)\quad \mbox{or}\quad
     h_{i}^+(\gamma)\prec h_{i}^-(\gamma)\prec
    h_{i+1}^-(\gamma)\prec h_{i+1}^+(\gamma).
  \end{displaymath}
\end{remark}


\begin{remark}
  In simple terms, if ${\mathcal{U}}_\pi^+$ and ${\mathcal{U}}_\pi^-$ are
  curvilinear strips (as in Figure~\ref{U_pi}), then the
  $\infty$-curves pass ``along'' the strip $\mathcal{U}_\pi^-$,
  and the order (\ref{OrderDNat}) is the order in which the
  islands appear in ${\mathcal{U}}_\pi^-$.
\end{remark}

\begin{figure}
  \centering
  \includegraphics [width=\textwidth]{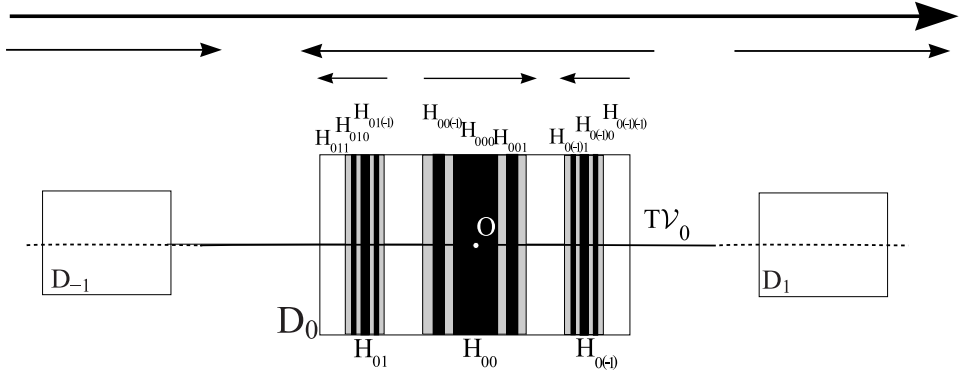}
  \caption{Ordering of $h$-strips in the island $D_0$ in the case
    $L=1$. The island $D_0$ contains three $h$-strips with multi-indices of
    length  two: $H_{0(-1)}$, $H_{00}$, $H_{01}$. Each of
    these $h$-strips contains three  $h$-strips with multi-indices of the length
    three. Within each of the $h$-strips the ordering of embedded $h$-strips inherits  the 		    pattern determined by the ``global'' arrow (in bold) and arrows over each of the islands.}
    \label{Ordering}
\end{figure}

\begin{figure}
  \centering
  \includegraphics [width=\textwidth]{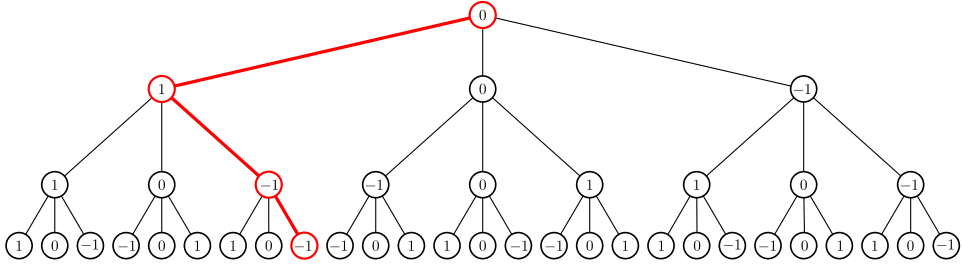}
  \caption{The same  ordering pattern as in Figure~\ref{Ordering} visualized as a ternary tree ($2L+1=3$). As an example, red path (shown by thick lines  in the grayscale version of the figure) indicates the position of $h$-strip $H_{01(-1)(-1)}$. As the graph shows, this $h$-strip is situated inside $H_{01(-1)}$, to the right from   $H_{01(-1)0}$.}
    \label{Tree}
\end{figure}

\subsection{Ordering of $h$-strips}
\label{Ordering_hv}

Consider the $h$-strip $H_{i_0\ldots i_n}\subset D_{i_0}$
defined in (\ref{eq:h_multi}).
The strip $H_{i_0\ldots i_n}$  contains $N$ non-intersecting $h$-strips $H_{i_0\ldots i_{n}i}$, where $i=-L,
\ldots,L$. Let us address the problem of ordering of $H_{i_0\ldots
i_{n}i}$ within $H_{i_0\ldots i_n}$.

\begin{definition}\label{def:h_many}
Let $\gamma$ be an $\infty$-curve. Denote
  \begin{displaymath}
    h_{i_0\ldots i_n}^\pm(\gamma)\equiv
    \gamma\cap\alpha_{i_0\ldots i_n}^\pm.
  \end{displaymath}
\end{definition}


\begin{definition}
  Let $\gamma=\{q(\xi), \xi\in(0;1)\}$  be an $\infty$-curve and  $H_A$ and $H_B$ be   $h$-strips.   Let $\gamma\cap H_A$ be a segment of curve $\{q(\xi), \xi\in(\xi_A^-;\xi_A^+)\}$ and  $\gamma\cap H_B$ be a segment of curve  $\{q(\xi), \xi\in(\xi_B^-;\xi_B^+)\}$. We say that
$H_B\succ H_A$ (or $H_A\prec H_B$) if $\xi_A^+<\xi_B^-$.
\end{definition}


\begin{theorem}\label{prop2}
Assume that 
the ordering of islands in $\Delta_0$ is given by (\ref{OrderDNat}).
Consider the $h$-strip $H_{i_0\ldots i_n}$ and the embedded $h$-strips $H_{i_0\ldots i_ni}$, $i=-L,\ldots,L$.
Then for any $\infty$-curve $\gamma$ the following relations hold
\begin{itemize}

\item[(a)] $h_{i_0\ldots i_n}^-(\gamma) \prec h_{i_0\ldots i_n}^+(\gamma) \implies
H_{i_0\ldots i_n(-L)} \prec H_{i_0\ldots i_n(-L+1)} \prec\ldots\prec H_{i_0\ldots i_n(L)}$
\item[(b)] $h_{i_0\ldots i_n}^+(\gamma)  \prec h_{i_0\ldots i_n}^-(\gamma) \implies
H_{i_0\ldots i_n(L)} \prec H_{i_0\ldots i_n(L-1)} \prec\ldots\prec H_{i_0\ldots i_n(-L)}$

\end{itemize}
\end{theorem}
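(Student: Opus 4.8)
The plan is to reduce Theorem~\ref{prop2} to the already-established ordering of islands with respect to an $\infty$-curve, namely~(\ref{OrderDNat}), by pushing the picture forward under $T^n$. The key observation is that the $h$-strip $H_{i_0\ldots i_n i}$ is, by the recurrence~(\ref{eq:h_multi}), the set of points $p\in D_{i_0}$ whose orbit visits $D_{i_1},\ldots,D_{i_n},D_i$ in that order; equivalently, $T^n$ maps $H_{i_0\ldots i_n i}$ into $D_{i_n}$ and, more precisely, $T^n H_{i_0\ldots i_n i}\subset H_{i_n i}=T^{-1}D_i\cap D_{i_n}$. So inside the island $D_{i_n}$ the $N$ strips $H_{i_n(-L)},\ldots,H_{i_n(L)}$ are simply the $h$-strips that the map $T$ sends onto the islands $D_{-L},\ldots,D_L$ respectively, and their ordering there is governed by the ordering of the islands $D_{-L}\prec\ldots\prec D_L$, transported by $T$. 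The claim is that applying $T^{-n}$ to carry this back into $D_{i_0}$ preserves or reverses the order in a way dictated entirely by whether the relevant collapsing boundaries satisfy $h^-\prec h^+$ or $h^+\prec h^-$.

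First I would make precise the base case $n=0$ (or the one-step case). For a fixed $\infty$-curve $\gamma$, the strips $H_{ij}=T^{-1}D_j\cap D_i$ cut $\gamma\cap D_i$ into $N$ consecutive subsegments, and I would show using Hypothesis~\ref{hyp2} together with Lemma~\ref{lemma2} that $T(\gamma\cap D_i)$ is a $v$-curve inside the $\infty$-curve $T\gamma$ — wait, more carefully: $T$ applied to an $h$-curve of $D_i$ is again an $h$-curve after intersecting with the target island, so $T$ maps the segment $\gamma\cap H_{ij}$ into $D_j$, and the resulting picture of $D_j$ being traversed is exactly described by the island-ordering relative to the $\infty$-curve $T\gamma$. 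By Lemma~\ref{lemma4} all $\infty$-curves pass the islands in the same order, so the order of the images is $D_{-L}\prec\ldots\prec D_L$. Reading this backward along $\gamma\cap D_i$, the order of the $H_{ij}$ is monotone in $j$; whether it is increasing or decreasing in $j$ is detected precisely by comparing $h_i^-(\gamma)$ and $h_i^+(\gamma)$, because the boundary $\alpha_i^+$ (collapsing to $+\infty$) is carried by $T$ to a piece of $\alpha^+$, which sits at the $+\infty$ end of the $\infty$-curve $T\gamma$. This is the content of Remarks~\ref{lemma5} and~\ref{lemma15}, which I would invoke to fix the sign once and for all. The general step then follows by induction: assuming the ordering of the $H_{i_0\ldots i_{n-1}j}$ inside $H_{i_0\ldots i_{n-1}}$ is known, I apply the one-step result inside the island $D_{i_n}$ (to the sub-strips $H_{i_n i}$, $i=-L,\ldots,L$), pull everything back by $T^{-1}$ via~(\ref{eq:h_multi}), and track how the relation $h^\mp_{i_0\ldots i_n}(\gamma)$ propagates — using that $T^n\alpha^\pm_{i_0\ldots i_n}\subset\alpha^\pm_{i_n}$, a fact recorded just after~(\ref{eq:h_block}).

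The main technical point — and I expect it to be the chief obstacle — is the bookkeeping of orientation under the alternation in Remark~\ref{lemma15}: depending on the parity pattern of the indices $i_0,\ldots,i_n$, the nested $h$-strips may have their $\alpha^+$- and $\alpha^-$-boundaries swapped relative to the ambient $\infty$-curve, and one must show that the dichotomy (a)/(b) in the statement is exactly the invariant that survives this alternation. Concretely, I would argue that the ordering of $H_{i_0\ldots i_n(-L)},\ldots,H_{i_0\ldots i_n(L)}$ along $\gamma$ is increasing in the last index if and only if $h^-_{i_0\ldots i_n}(\gamma)\prec h^+_{i_0\ldots i_n}(\gamma)$, and this biconditional is preserved at every level because intersecting with an island and applying $T$ (or $T^{-1}$) either preserves both the order of the enclosed $v$-/$h$-curves and the order of the collapse-labelled endpoints, or reverses both simultaneously. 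Since reversing two things twice is the identity, the stated equivalence is the correct level-independent formulation. All the geometric inputs needed — that $T$ sends $v$-strips to $v$-strips across all islands (Hypothesis~\ref{hyp2}), that the boundary curves carry the $\pm\infty$ collapse labels consistently (Hypothesis\hyperref[hyp+]{+}), that $\infty$-curves are unknotted Jordan arcs running from $-\infty$ to $+\infty$ (Lemma~\ref{lemma2}), and that island order is $\infty$-curve-independent (Lemma~\ref{lemma4}) — are already available, so the proof is essentially an induction on $n$ with careful sign tracking, and I would present it in that form.
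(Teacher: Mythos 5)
Your proposal is correct and follows essentially the same route as the paper: an induction on the length of the multi-index, with the base case being the ordering of the strips $H_{i_0 i}$, $i=-L,\ldots,L$, inside the island $D_{i_0}$ (read off from the island ordering (\ref{OrderDNat}) transported by $T$ along an $\infty$-curve, the orientation being fixed by comparing $h^-$ and $h^+$), and the induction step obtained by applying $T$ (equivalently pulling back by $T^{-1}$) while tracking the collapse-labelled boundaries. The paper's own proof is merely an outline of exactly this induction, so your extra attention to the orientation bookkeeping via Remarks~\ref{lemma5} and~\ref{lemma15} and the inclusion $T^n\alpha^\pm_{i_0\ldots i_n}\subset\alpha^\pm_{i_n}$ is consistent with, and somewhat more explicit than, the published argument.
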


\begin{proof}
We outline the proof (by induction) as follows. First, we establish the order of
 $h$-strips $H_{i_0(-L)}$, $H_{i_0(-L+1)}$, \ldots, $H_{i_0(L)}$ within the island $D_{i_0}$ depending on the situation (a) or (b), (the base of induction). Then, by applying the map $T$  it is straightforward to establish the ordering of
$H_{ i_0\ldots i_{n+1}(-L)}$, $H_{i_0\ldots i_{n+1}(-L+1)}$, \ldots, $H_{i_0\ldots i_{n+1}(L)}$ within $H_{ i_0\ldots i_{n+1}}$ if the order of  $H_{i_0\ldots i_n(-L)}$,\ldots  $H_{i_0\ldots i_n(L)}$ is known (the step of induction).
\end{proof}


\begin{remark}\label{prop3}
It is also straightforward to prove that if Hypotheses~\ref{hyp1}--\ref{hyp3} and Hypothesis\hyperref[hyp+]{+} hold and the ordering of islands in $\Delta_0$ is given by (\ref{OrderDNat}) then the following implications are valid
\begin{itemize}
\item[(a)]
$h_{-L}^-(\gamma)\prec
h_{-L}^+(\gamma), \,\,
h_{i_0\ldots i_n}^-(\gamma)\prec
h_{i_0\ldots i_n}^+(\gamma)\;\Rightarrow\;
h_{i_0\ldots i_n(-L)}^-(\gamma)\prec
h_{i_0\ldots i_n(-L)}^+(\gamma)$
\item[(b)]
$h_{-L}^-(\gamma)\prec
h_{-L}^+(\gamma), \,\,
h_{i_0\ldots i_n}^+(\gamma)\prec
h_{i_0\ldots i_n}^-(\gamma)\;\Rightarrow\;
h_{i_0\ldots i_n(-L)}^+(\gamma)\prec
h_{i_0\ldots i_n(-L)}^-(\gamma)$
\item[(c)]
$h_{-L}^+(\gamma)\prec
h_{-L}^-(\gamma),\,\,
h_{i_0\ldots i_n}^-(\gamma)\prec
h_{i_0\ldots i_n}^+(\gamma)\;\Rightarrow\;
h_{i_0\ldots i_n(-L)}^+(\gamma)\prec
h_{i_0\ldots i_n(-L)}^-(\gamma)$
\item[(d)]
$h_{-L}^+(\gamma)\prec
h_{-L}^-(\gamma), \,\,
h_{i_0\ldots i_n}^+(\gamma)\prec
h_{i_0\ldots i_n}^-(\gamma)\;\Rightarrow\;
h_{i_0\ldots i_n(-L)}^-(\gamma)\prec
h_{i_0\ldots i_n(-L)}^+(\gamma)$
\end{itemize}
Here $\gamma$ is an arbitrary $\infty$-curve. These implications establish
the relation between the orientation of boundaries of the leftmost island $D_{-L}$, the strip $H_{i_0\ldots i_n}$ and the ordering of the strips $H_{i_0\ldots i_n i}$, $i=-L,\ldots,L$ within $H_{i_0\ldots i_n}$.
\end{remark}

{
If the order of the islands and the positions of their boundaries $\alpha_{i}^{\pm}$ are known (with respect to an arbitrary $\infty$-curve), then the established relations are sufficient to describe   positions of $h$-strips with different indexes. The ordering scheme   can be sketched as follows. It is necessary to draw the islands $D_{-L}$,  $D_{-L+1}$, $\ldots$, $D_L$ from the left to the right,  to draw the ``global'' arrow in the same direction (the upper bold arrow in Figure~\ref{Ordering}). Next, it is necessary  to draw an arrow over each island such that any of those arrows is directed from $\alpha_{i}^{-}$ to   $\alpha_{i}^{+}$. According to Remark~\ref{lemma15}, the directions of any two successive arrows will be opposite. Then  one can continue the procedure recursively, drawing arrows of higher levels and identifying the position of $h$-strips with different indexes following to directions of the arrows.    In simple terms,   the order of $h$-strips of $(n+1)$th level within an
  $h$-strip of $n$th level repeats the order of islands $D_i$, $i=-L,\ldots,
  L$ within the set $\Delta_0$. The peculiarity is that the last index of
  embedded $h$-strips increases in direction from  $\alpha_{i_0\ldots
  i_{n}}^-$ to $\alpha_{i_0\ldots i_{n}}^+$. An example of the ordering procedure is presented in Figure~\ref{Ordering}.  As follows from the ordering procedure, the positions of $h$-strips with different indexes can be catalogued conveniently using a  $(2L+1)$-ary tree graph, where the root of the graph corresponds to the island $D_0$ and leaves of increasing levels denote   $h$-strips with indexes of growing length. Figure~\ref{Tree}   presents an example of such a tree   for $L=1$ (four levels of the tree are shown).}

\section{Algorithm for constructing of a gap soliton by the given code}
\label{subAlgorithm}

\subsection{General remarks}
\label{GenRem}

Let $\widehat\psi(x)$ be a gap soliton solution of (\ref{1D_rep}). Then
$\widehat\psi(x)$ can be associated with an orbit  $\textbf{s} = \{\ldots, p_{-1},
p_0, p_1, \ldots\}$ whose entries are  defined by  $p_n = (\widehat\psi(n\pi),
\widehat\psi_x(n\pi))$, $n\in\mathbb{Z}$.  Since $\widehat\psi(x)$ satisfies the localization condition (\ref{BoundCond}), the orbit $\textbf{s}$
is a homoclinic loop of the fixed point ${\mathcal{O}}_0=(0,0)$, i.e.,
$ \lim_{n\to\pm\infty} p_n={\mathcal{O}}_0$.
By the coding theorem, the orbit $\textbf{s}$ is  associated with  an unique code $a\in \Omega^N$ which has only a finite number of nonzero entries,
i.e., it has the form
\begin{equation}\label{GapCode}
  a=\{\ldots,0,0,i_1,i_2,\ldots,i_m,0,0,\ldots\},
  \quad i_1\ne 0,\quad i_m\ne 0.
\end{equation}
This   code is naturally associated with the gap soliton  $\widehat\psi(x)$.
By definition of $\Sigma$,  the code (\ref{GapCode}) determines
the order in which  orbit  $\textbf{s}$ visits the islands $D_i$, $i=-L,\ldots,L$.
Therefore this code can be obtained having the location of the islands
and $m$ coordinates of the points $(\widehat\psi(n\pi),\widehat\psi_x(n\pi))$, $n=1,\ldots,m$.
This yields the answer to the Question~1 in Section~\ref{Intro}.

Conversely, consider a code $a\in\Omega^N$ of the form
(\ref{GapCode}). Due to Theorem~\ref{theorem1}, there
exists an unique orbit  $\textbf{s}$ which corresponds to this code, and
according to Corollary~\ref{prop1_cor1} this orbit is a homoclinic loop of the fixed point ${\mathcal{O}}_0=(0,0)$. Let
$p=(\psi_0, \psi'_0)$ be  an arbitrary point of this orbit. Consider
$\widehat\psi(x)$, the solution of the Cauchy problem for (\ref{1D_rep})
with the initial conditions
\begin{equation}\label{IC}
  \widehat\psi(0)=\psi_0,  \quad \mbox{ and}
  \quad \widehat\psi_x(0)=\psi'_0.
\end{equation}
This solution is a gap soliton, i.e., $\widehat\psi(x)\to 0$ as $x\to\pm\infty$.
Hence for construction of the profile of a gap soliton
$\widehat\psi(x)$ by the code (\ref{GapCode}) one
has to find {\it any} entry of the orbit $\textbf{s}$ with sufficiently good accuracy.
Then the profile of $\widehat\psi(x)$  can be recovered from the numerical
solution of the Cauchy problem (\ref{IC}).

\subsection{Description of the algorithm}\label{BrDescr}

Consider a  code  of the form (\ref{GapCode}).
Having infinite number of zeros from the left, this code corresponds to an
orbit that remains in $D_0$ for infinitely many iterations of $T^{-1}$.
The relation between the entries of the code (\ref{GapCode}), the entries of
the orbit $\textbf{s}=\{\ldots p_{-1},p_0,p_1,\ldots\}$, $p_n=
(\psi_n,\psi'_n)$,  and the values of the solution $\widehat\psi(x)$ of
(\ref{1D_rep}) is given in Table~\ref{Correspond}. Note
that for $n\leq0$ the points $p_n$ are situated on ${\mathcal{V}}_0$, the local
unstable manifold of ${\mathcal{O}}_0$, and for $n>m$ the points $p_n$ lie on
${\mathcal{H}}_0$, the local stable manifold of ${\mathcal{O}}_0$.

Introduce the notation
\begin{displaymath}
(0\times M)=\underbrace{00\ldots0}_{M\mbox{ times}}.
\end{displaymath}
For instance, the $h$-strip indexed by $M$ consecutive zeros and symbols
$\{i_1,i_2,\ldots, i_m\}$ is denoted by $H_{(0\times M)i_1\ldots i_m}$.
Since it is sufficient to find any entry of $\textbf{s}$, we fix $M>0$ (large enough) and seek for the entry $p_{-M}$. This entry can be characterized as follows:

(a) $p_{-M}\in  H_{(0\times M)i_1\ldots i_m(0\times K)}$, where $K>0$ is arbitrarily large, and
\begin{equation}
 H_{(0\times M)i_1\ldots i_m(0\times K)}\subset\ldots H_{(0\times M)i_1\ldots i_m}\subset \ldots H_{(0\times M)i_1}\subset H_{(0\times M)}\subset D_0;\label{eq:chain}
 \end{equation}

(b)  $p_{-M}\in{\mathcal{V}}_0$;

(c) $T^{M+m+1}p_{-M}\in {\mathcal{H}}_0$.

\begin{table}
  \caption{The relation between the entries of the code (\ref{GapCode}),
    the entries of the orbit $\textbf{s}=\{\ldots p_{-1},p_0,p_1,\ldots\}$
    and the values of $\widehat\psi(x)$ of (\ref{1D_rep}),
    $i_1\ne 0$, $i_m\ne 0$}
  \begin{tabular}{|l|ccccccccc|}
    \hline
    Code $a$  &  $\ldots$  &  $0$  &  $\ldots$  &  $0$  &
      $i_1$  &  $\ldots$  &  $i_m$  &  $0$  &  $\ldots$\\
    Orbit $\bf s$  &  $\ldots$  &  $p_{-M}$  &  $\ldots$  &  $p_0$  &
      $p_1$  &  $\ldots$  &  $p_m$  &  $p_{m+1}$  &  $\ldots$\\
    Island  &  $\ldots$  &  $D_0$  &  $\ldots$  &  $D_0$  &
      $D_{i_1}$  &  $\ldots$  &  $D_{i_m}$  &  $D_0$  &  $\ldots$\\
    $\widehat\psi(x)$  &  $\ldots$  &  $\widehat{\psi}(-M\pi)$  &  $\ldots$  &  $\widehat{\psi}(0)$  &
      $\widehat{\psi}(\pi)$  &  \ldots  &  $\widehat{\psi}(m\pi)$  &  $\widehat{\psi}((m+1)\pi)$  &  $\ldots$\\
  \hline
  \end{tabular}
  \label{Correspond}
\end{table}
Let us take a closer look at the island $D_0$. It is limited by a pair of $h$-curves
$\alpha^+_0$ and $\alpha^-_0$ and a pair
of $v$-curves $\beta^+_0$ and $\beta^-_0$ (see Figure~\ref{Island(def)}). The
curves $\alpha^\pm_0$ are such that if $q=(\psi,\psi')\in\alpha^\pm_0$,
then the solution $\psi(x)$ of (\ref{1D_rep}) with these initial data at $x=0$
satisfies the condition $ \lim_{x\to\pi}\psi(x)=\pm\infty$,
(the sign $+$ or $-$ corresponds to the superscripts in
$\alpha^{\pm}_0$). According to Corollary~\ref{cor25}, $T{\mathcal{V}}_0$ is an $\infty$-curve that intersects $\alpha^+_0$ and $\alpha^-_0$, and ${\mathcal{V}}_0=T{\mathcal{V}}_0\cap D_0$. According to Definition~\ref{def:h_i}, denote
\begin{displaymath}
  h_0^-\equiv h_0^-(T{\mathcal{V}}_0)\equiv T{\mathcal{V}}_0\cap \alpha^-_0,\quad
  h_0^+\equiv h_0^+(T{\mathcal{V}}_0)\equiv T{\mathcal{V}}_0\cap \alpha^+_0.
\end{displaymath}
The points $h_0^-$ and  $h_0^+$ are rough bounds for the location
of $p_{-M}$.
Therefore $p_{-M}$
is situated  on $T{\mathcal{V}}_0$ between the points $h_0^-$ and $h_0^+$,
see Figure~\ref{Algorithm}.

Since $p_{-M}\in H_{0\times M}$, consider $h$-strip $H_{0\times M}$ in  detail. This $h$-strip is bounded by two $h$-curves, $\alpha^-_{(0\times M)}$ and $\alpha^+_{(0\times M)}$, and each of them intersects $T{\mathcal{V}}_0$ in
an unique point. Following Definition~\ref{def:h_many} denote these points by $h^-_{(0\times M)}$ and $h^+_{(0\times M)}$.
The following relations hold
\begin{displaymath}
h^-_{0\times M}=T^{-(M-1)}h_0^-, \quad
h^+_{0\times M}=T^{-(M-1)}h_0^+
\end{displaymath}
The point $p_{-M}$ is situated on $T{\mathcal{V}}_0$ between
the points $h^-_{0\times M}$ and $h^+_{0\times M}$.

Since the first nonzero symbol in the code (\ref{GapCode}) from the left is $i_1$,
the point $p_{-M}$  is situated in $h$-strip
$H_{(0\times M)i_1}\subset H_{0\times M}$.
The strip $H_{(0\times M)i_1}$ is
bounded by two $h$-curves, $\alpha^-_{(0\times M)i_1}$ and $\alpha^+_{(0\times M)i_1}$, and each of them intersects $T{\mathcal{V}}_0$ in
an unique point. Denote these points by $h^-_{(0\times M)i_1}$ and $h^+_{(0\times M)i_1}$. The point $p_{-M}$ is situated on $T{\mathcal{V}}_0$ between
the points $h^-_{(0\times M)i_1}$ and $h^+_{(0\times M)i_1}$.

Repeating this procedure following the chain (\ref{eq:chain}), one concludes that $p_{-M}$ is
situated on $T{\mathcal{V}}_0$ between the points $h^-_{(0\times M)i_1\ldots i_m(0\times K)}$ and $h^+_{(0\times M)i_1\ldots i_m(0\times K)}$. 
 These points
are the points of intersection of the boundary of $H_{(0\times
M)i_1\ldots i_m(0\times K)}$ with $T{\mathcal{V}}_0$.

Finally, since the code (\ref{GapCode}) has   infinite number of
consecutive  zeros from the right, $T^{M+m+K+1}p_{-M}\in
{\mathcal{H}}_0$.  Application of this condition completes the procedure.

\begin{figure}
  \includegraphics[width=\textwidth]{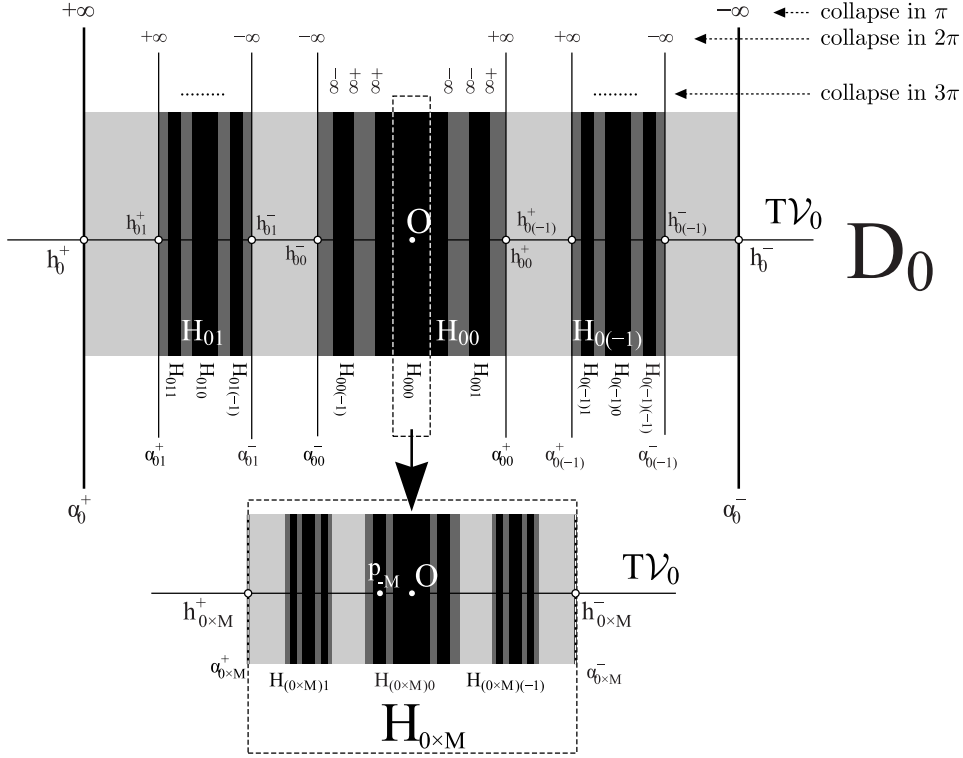}
  \caption{The scheme of the island $D_0$, positions of $h$-strips
    and the points $h_{0}^\pm$, $h_{0i_1}^\pm$ for the  case of coding using the alphabet of  three symbols (the symbols are ``$-1$'', ``$0$'' and ``$1$'').
    The point $p_{-M}$ is situated in the strip $H_{0\times M}$ between the points
    $h_{0\times M}^-$ and $h_{0\times M}^+$.
    }
  \label{Algorithm}
\end{figure}

\subsection{Implementation}
\label{Implem}

Let $\lambda_1$ be the eigenvalue of $DT({\mathcal{O}}_0)$ such
that $|\lambda_1|>1$  and its corresponding normalized
eigenvector be ${\rm e}_1=(\xi,\eta)$. Another eigenvalue of
$DT({\mathcal{O}}_0)$ is $\lambda_2=1/\lambda_1$, with
corresponding normalized eigenvector ${\rm e}_2=(\xi,-\eta)$.
The vectors ${\rm e}_{1}$ and ${\rm e}_{2}$ are tangent to
the local unstable ${\mathcal{V}}_0$ and local stable  ${\mathcal{H}}_0$ manifolds
in the point ${\mathcal{O}}_0$ correspondingly, see Remark~\ref{Rem32}. This means that if $\textbf s$ is a homoclinic
orbit of ${\mathcal{O}}_0$ with components $p_n=(\psi_n,\psi'_n)$
then
\begin{displaymath}
  \lim_{n\to+\infty}({\psi_n}/{\psi'_n})=
    -{\xi}/\eta,\quad
  \lim_{n\to-\infty}({\psi_n}/{\psi'_n})=
   {\xi}/\eta.
\end{displaymath}

Let us introduce the parametrization on  $T{\mathcal{V}}_0$ as
follows:  for $p\in T{\mathcal{V}}_0$ the value of parameter $\varepsilon=
\varepsilon(p)$ is  the length of the curve $T{\mathcal{V}}_0$ from the
point ${\mathcal{O}}_0$ to the point $p$ taken with the sign ``+'' if $p$
lies between ${\mathcal{O}}_0$ and $h^+_0$ and the sign ``$-$'' if $p$
is situated between ${\mathcal{O}}_0$ and $h^-_0$.  The point $p_{-M}$
is situated on the segment of $T{\mathcal{V}}_0$ between the points
$h^-_{0\times M}$ and $h^+_{0\times M}$.  For $M$ large enough
this segment  is small and can be approximated by the tangent to ${\mathcal{V}}_0$ in
${\mathcal{O}}_0$. Therefore, for homoclinic orbit of   ${\mathcal{O}}_0$ with the code (\ref{GapCode})
\begin{displaymath}
  \psi_{-M}\approx\varepsilon \xi,\quad
  \psi'_{-M}\approx\varepsilon\eta,\quad
  |\varepsilon|\ll 1.
\end{displaymath}
By the same reason, for $K$ integer and large enough the following
relation holds
\begin{displaymath}
  {\psi_{m+K}}/{ \psi'_{m+K}}
  \approx-{\xi}/{\eta}.
\end{displaymath}

Once the  parametrization is introduced, the numerical algorithm (see Algorithm~\ref{algo2}) follows the steps of subsection~\ref{BrDescr}. The main part  of Algorithm~\ref{algo2} consists in consecutive finding of the bounds  $\varepsilon(h^-_{(0\times M)i_1})$ and $\varepsilon(h^+_{(0\times M)i_1})$,  $\varepsilon(h^-_{(0\times M)i_1i_2})$ and  $\varepsilon(h^+_{(0\times M)i_1i_2})$, and so on. The algorithm exits  with the value $\varepsilon^*=\varepsilon(p_{-M})$ found with a certain (controlled) accuracy. Once $\varepsilon^*$ is found, then  the shape of the  soliton   $\widehat{\psi}(x)$ on the interval $x\in(-M\pi;K\pi)$  is described  by the solution of Cauchy problem for (\ref{1D_rep}) with initial data
$\psi(-M\pi)=\varepsilon^*\xi$, $\psi_x(-M\pi)=
\varepsilon^*\eta$. For  $x<-M\pi$ and  $x>K\pi$ the profile can be
continued with linear asymptotics, i.e., by the solutions of
(\ref{eq:linear}) vanishing at $-\infty$
and $+\infty$, respectively, which are properly scaled to ensure the continuity of
the whole solution and its derivative.

\subsection{Some details}

Algorithm~\ref{algo2} was implemented in \texttt{Python} using   libraries
\texttt{numpy} and \texttt{scipy}. It includes nested recursive procedure
for processing of each index of the gap soliton code. The following details
should be mentioned:

1. The Cauchy problem for (\ref{1D_rep}) was solved using
    standard Runge-Kutta method of 4th order with constant step.

2. We found  that the values $M=2$ or $3$ and $K=2$ or $3$ are enough for
    the most of the calculations. The accuracy was estimated by changing
    these parameters and comparing the results. In vicinity of the upper edge of
    the gaps (when the solitons are poorly localized) $M$ and $K$ should be increased.

3. The condition $\lim_{x\to n\pi}\psi(x)=\pm\infty$
    that appears in the algorithm,  was replaced by the condition
    $\psi(n\pi)=\psi_\infty$ where $\psi_\infty$ is a large positive (or,
    correspondingly, great negative) number. Most  of the calculations
    were fulfilled for $|\psi_\infty|=10^6$ and checked for
    $|\psi_\infty|=10^8$ to confirm that this choice of   $\psi_\infty$ does not affect the results.

4. The main difficulty in practical implementation of Algorithm~\ref{algo2}  is
    caused by the fact that widths of $h$-strips $H_{(0\times M)i_1i_2\ldots}$
    decrease rapidly as the number of indices grows. Therefore the difference
    \begin{displaymath}
      \left|\varepsilon \left( h_{(0\times M)i_1i_2\ldots}^+ \right) -
      \varepsilon \left( h_{(0\times M)i_1i_2\ldots}^- \right) \right|
    \end{displaymath}
    eventually becomes beyond the computer accuracy. This hinders the calculation of  solitons with large $m$ if the program  code operates with
    floating-point numbers  with short mantissa. In order to overcome this difficulty,
    the \texttt{Python}  library {\texttt{gmpy2}} for arbitrary-precision arithmetic
    was used. In principle, this allows to compute   gap solitons with codes with
    arbitrary  $m$ (the algorithm was tested for codes with $m\leq 10$).

5. The algorithm does not allow to find gap solitons in the situation when
   Hypotheses~\ref{hyp1}--\ref{hyp3} do not hold. In particular, it cannot be applied to
    compute small-amplitude gap solitons in cosine potential (\ref{CosPot}) with
    values $(\omega,A)$ situated close to the lower gap edges [see light-gray areas in (\ref{CodZones})]. In this situation,
    a numerical continuation procedure in $\omega$ or in $A$  can be applied.

\section{Results}\label{Results}
The algorithm was applied to (\ref{1D_rep}) with cosine potential
(\ref{CosPot}). Recall that according to Conjecture made in subsection~\ref{GeomCon},  in
Regions~1 and 2 on Figure~\ref{CodZones} each gap soliton $\widehat \psi(x)$ has an
unique code of the form (\ref{GapCode}).
For Region~1  the coding alphabet consists of 3 symbols,  and for Region~2 it consists of 5
symbols. We denote the symbols of the alphabet by the integer numbers $-L,\ldots,L$, where $L=1$
for Region~1 and $L=2$ for Region~2.
This means that in Region~1 the coding alphabet is $\{-1,0,1\}$ and in Region~2 the coding alphabet is $\{-2,-1,0,1,2\}$.

Let (\ref{GapCode}) be a code of a gap soliton. Let us call $m$ the \emph{length} of the code. In Region 1, the first band gap, there are 8 distinct gap solitons with codes of the length  $m\leq 3$ (all other gap solitons with $m\leq 3$  can be obtained from them by  involutions $x\to -x$ and $\psi\to -\psi$). Figure~\ref{fig:solitons1} presents the profiles of these gap solitons computed using Algorithm~\ref{algo2}. The gap soliton in Figure~\ref{fig:solitons1}(a) corresponds to the code $\{\ldots,0,1,0,\ldots\}$. This solution   is a fundamental gap soliton of the first gap
(FGS$_1$)  \cite{Zhang09_1,Zhang09_2} (in  \cite{Yang10} this solution is called  ``on-site'' gap
soliton). FGS$_1$ is the unique elementary entity for Region~1.   The code $\{\ldots,0,-1,0,\ldots\}$ corresponds to FGS$_1$ taken with minus sign (not shown in Figure~\ref{fig:solitons1}).  More complex gap solitons can be regarded as bound states of several FGS$_1$ taken with the   plus or minus sign. For instance, the ``off-site'' gap soliton in the terminology
of \cite{Yang10} has the code $\{\ldots,0,1,-1,0,\ldots\}$ and can be regarded as a
Figure~\ref{fig:solitons1}(e).

In Region~2, inside the second band gap, there are 8 distinct (up to the involutions) gap solitons
with codes of the length  $m\leq 2$. Their profiles are shown  in
Figure~\ref{fig:solitons2}.  FGS$_1$ also presents as an elementary entity in Region~2
having
the code $\{\ldots,0,2,0,\ldots\}$ [see Figure~\ref{fig:solitons2}(a)]. However there also exists another fundamental gap soliton
FGS$_2$ with the code $\{\ldots,0,-1,0,\ldots\}$ [see Figure~\ref{fig:solitons2}(e)]. This is the so-called \textit{subfundamental} soliton which was introduced in \cite{Malomed06}; solution of this type was also discussed in \cite{Zhang09_2} where it was termed to as the second fundamental gap soliton.
The code $\{\ldots,0,1,0,\ldots\}$ corresponds to
FGS$_2$ taken with minus sign.
Again, more complex gap solitons can be regarded as bound states of FGS$_1$ and FGS$_2$ taken with different signs.

While the solitons illustrated in Figure~\ref{fig:solitons1} and Figure~\ref{fig:solitons2} correspond to relatively simple solutions with short codes ($m=1$, $2$ or $3$), the algorithm is also applicable to more complex solitons and to solutions from higher gaps. For instance, in Figure~\ref{fig:solitons3} complex gap solitons, ($m=10$) from the 2-nd and the 3-rd gaps are depicted.

\begin{figure}
  \centering
  \includegraphics[width=1\textwidth]{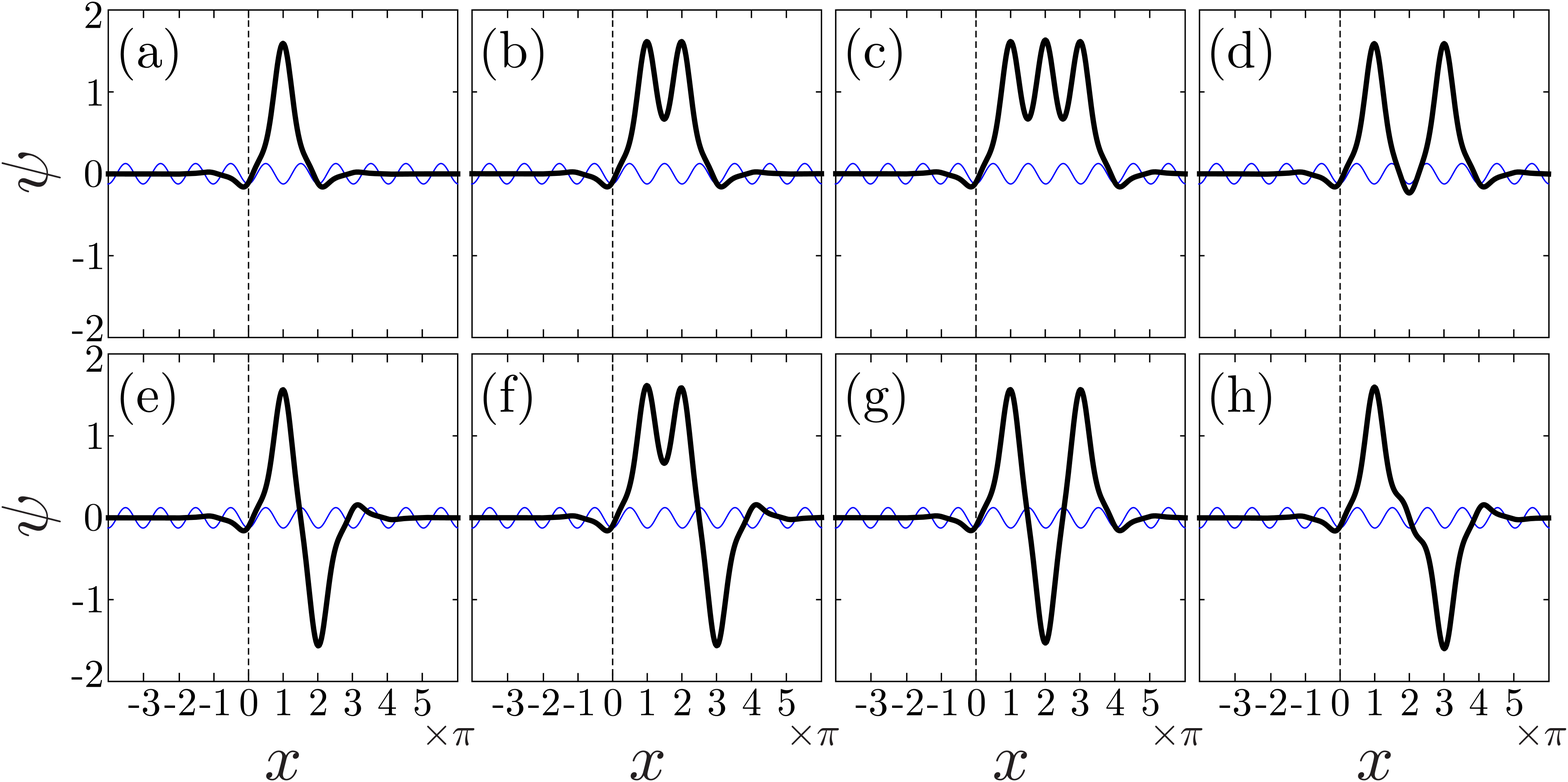}
  \caption{The gap solitons from Region 1, the first gap. Each panel shows the
    spatial profile of the soliton. The corresponding codes are: (a) $\{\ldots,0,1,0,\ldots\}$; (b) $\{\ldots,0,1,1,0,\ldots\}$; (c) $\{\ldots,0,1,1,1,0,\ldots\}$; (d) $\{\ldots,0,1,0,1,0,\ldots\}$; (e) $\{\ldots,0,1,-1,0,\ldots\}$; (f) $\{\ldots,0,1,1,-1,0,\ldots\}$; (g) $\{\ldots,0,1,-1,1,0,\ldots\}$; (h) $\{\ldots,0,1,0,-1,0,\ldots\}$. For all shown solutions   $\omega = 1$ and  $A = -3$. The vertical dashed line in each panel indicates the position of the point $p_0$, such that $Tp_0\notin D_0$, $T^{-n} p_0\in D_0$, $n\geq 0$, see the explanation in subsection~\ref{BrDescr}. Blue thin lines show  schematically the cosine potential (\ref{CosPot}).}
  \label{fig:solitons1}
\end{figure}

\begin{figure}
  \centering
  \includegraphics[width=1\textwidth]{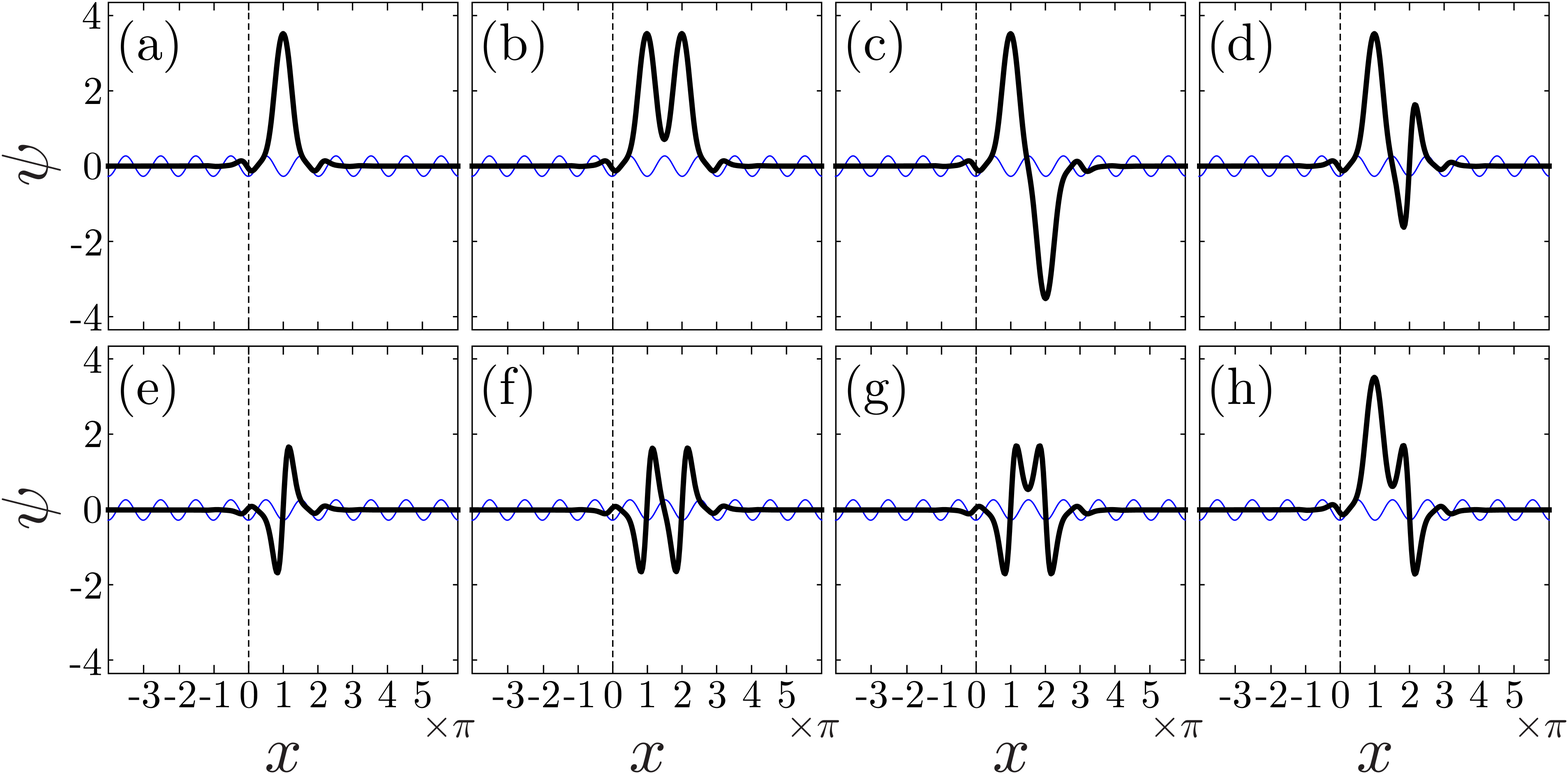}
  \caption{The gap solitons from Region 2, the second gap. The corresponding codes are: (a) $\{\ldots,0,2,0,\ldots\}$; (b) $\{\ldots,0,2,2,0,\ldots\}$; (c) $\{\ldots,0,2,-2,0,\ldots\}$; (d) $\{\ldots,0,2,-1,0,\ldots\}$; (e) $\{\ldots,0,-1,0,\ldots\}$; (f) $\{\ldots,0,-1,-1,0,\ldots\}$; (g) $\{\ldots,0,-1,1,0,\ldots\}$; (h) $\{\ldots,0,2,1,0,\ldots\}$. For all shown solutions  $\omega = 4$, $A = -10$.  The vertical line indicates the position of the point $p_0$ (see caption of Figure~\ref{fig:solitons1}), and blue thin lines show  schematically the cosine potential (\ref{CosPot}).}
  \label{fig:solitons2}
\end{figure}

\begin{figure}
  \centering
  \includegraphics[width=0.99\textwidth]{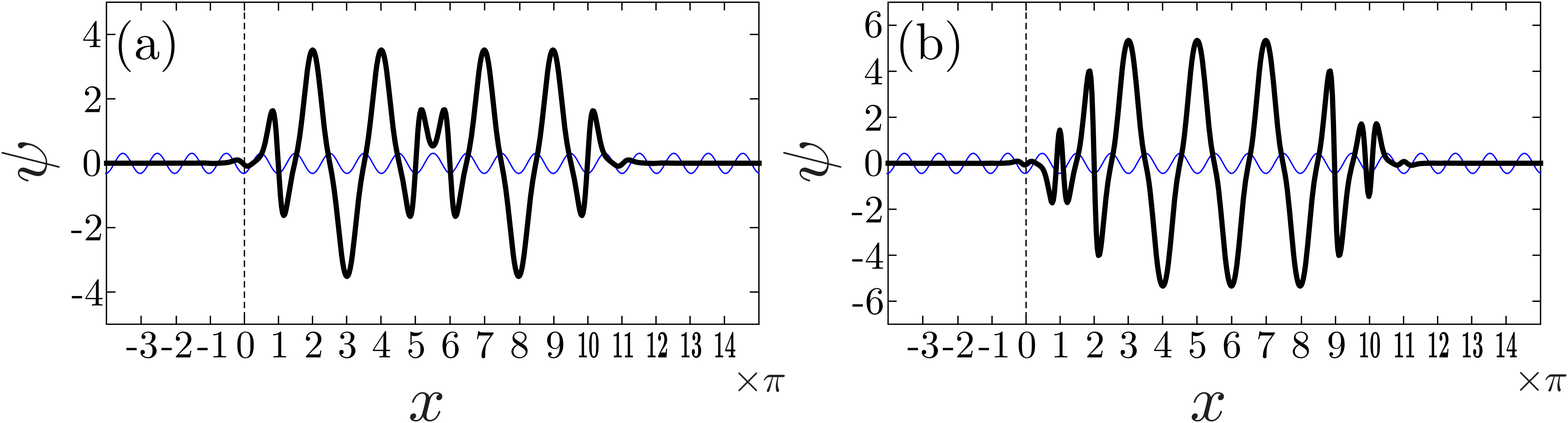}
  \caption{(a) Soliton from the second gap ($\omega=4$, $A=-10$) with the code  $\{\ldots,0, 1, 2, -2, 2, -1, 1, 2, -2, 2, -1, 0,\ldots\}$. (b)  Soliton from the third gap ($\omega=10$, $A=-20$) with the code  $\{\ldots, 0, -1, 2, 3, -3, 3, -3, 3, -3, 2, 1, 0,\ldots\}$. For each    soliton  the length  of the code  is $m=10$.}
  \label{fig:solitons3}
\end{figure}

\section{Summary and discussion}
\label{Disc}

In this paper we have proposed a numerical algorithm~\ref{algo2}, for
computation of localized modes for the spatially one-dimensional Gross-Pitaevskii
equation with a periodic potential $U(x)$. These modes, i.e., gap
solitons, are of the form $\Psi(x,t)=e^{-i\omega t}\psi(x)$ where
$\psi(x)$ satisfies (\ref{1D_rep}) and localization conditions
(\ref{BoundCond}). It is assumed that the parameters of the potential
$U(x)$ allow for coding of all these modes by bi-infinite sequences of
symbols from some alphabet of finite length. It is known \cite{AA13}, that
this coding is indeed possible, if certain assumptions (Hypotheses~\ref{hyp1}--\ref{hyp3}) hold,
see Theorem~\ref{theorem1}. In particular, according to \cite{AA13}, the
coding is possible for the potential $U(x)=A\cos 2x$, if the parameters
$\omega$ (the frequency of the mode) and $A$ (the amplitude of the
potential) belong to vast areas on the  plane $(\omega,A)$, see
Figure~\ref{CodZones}.

The coding approach  exploits the fact that the ``most'' of
solutions of (\ref{1D_rep}) tend to infinity at some finite point of real
axis. As a result the initial data for Cauchy problem for
(\ref{1D_rep}) corresponding to  bounded in $\mathbb{R}$
solutions form a fractal subset of ${\mathbb R}^2$. The initial data
corresponding to gap soliton solutions are situated in the intersection of this
fractal set with the local unstable manifold
${\mathcal{V}}_0$ of the zero fixed point ${\mathcal{O}}_0=(0,0)$. The numerical procedure
for computation of gap solitons consists in
iterative localization of the segments on ${\mathcal{V}}_0$ where the initial data
corresponding to a given code are situated.
Then the  profile of the nonlinear mode can be computed using the Runge-Kutta
method. Algorithm~\ref{algo2} was implemented in \texttt{Python} and applied
for  gap solitons in cosine potential (\ref{CosPot}).

Algorithm~\ref{algo2} allows for various generalizations. With minor changes the algorithm can be applied in the case of more complex odd nonlinearities, both autonomous  and
non-autonomous. It is known that the structure of $\mathcal{U}_\pi^\pm$ in (\ref{1D_rep}) in cubic-quintic \cite{UMJ16} and periodic \cite{UMJ15} nonlinearities are similar to that described here, and therefore the application of Algorithm~\ref{algo2} in these cases is straightforward.
For these problems, Algorithm~\ref{algo2} may be a useful tool to study  bifurcations
of gap solitons for (\ref{1D_rep}) with different codes, similar to the one fulfilled in \cite{ABK04} for the intrinsic localized modes of the DNLS .  

Another possible application of Algorithm~\ref{algo2} is the study of stability of gap solitons. An interesting question is whether it is possible to predict stability or instability of a gap soliton by its code. The results of this study will be reported elsewhere.

%

\appendix

\section{Proof of Theorem~\ref{prop1}}\label{App_A}

\begin{proof}Consider the following nested intersections
  \begin{align}
  &{\mathcal{V}}_{i^*}\equiv V_{i^*i^*}\cap
  V_{i^*i^*i^*}\cap V_{i^*i^*i^*i^*}\cap \ldots \label{eq:v_cal} \\
  &{\mathcal{H}}_{i^*}\equiv H_{i^*i^*}\cap H_{i^*i^*i^*}\cap
  H_{i^*i^*i^*i^*}\cap\ldots \label{eq:h_cal}
  \end{align}
  Due to Lemma~3.1 from \cite{AA13}, the set ${\mathcal{V}}_{i^*}$ is a
  $v$-curve and the set ${\mathcal{H}}_{i^*}$ is an $h$-curve.
  An intersection of $h$-curve and $v$-curve, ${\mathcal{V}}_{i^*}
  \cap{\mathcal{H}}_{i^*}$, is a point which we denote by ${\mathcal{O}}_{i^*}$.
  Evidently, $T{\mathcal{O}}_{i^*}$ is a point having all its $T$-images and
  $T$-pre-images  situated in $D_{i^*}$. Therefore $T{\mathcal{O}}_{i^*}\in
  {\mathcal{V}}_{i^*}$ and $T{\mathcal{O}}_{i^*}\in {\mathcal{H}}_{i^*}$, which
  implies   that $T{\mathcal{O}}_{i^*}={\mathcal{O}}_{i^*}$.  Since the point
  of  intersection of ${\mathcal{V}}_{i^*}$ and ${\mathcal{H}}_{i^*}$ is unique,
  ${\mathcal{O}}_{i^*}$ is unique fixed point of $T$ in $D_{i^*}$. The
  point (a) is proved.

 Let us prove the point (b).  According to the definition (\ref{eq:v_cal}),
  $p\in {\mathcal{V}}_{i^*}$ if and only if $p\in D_{i^*}$  and $T^{-n}p\in
  D_{i^*}$ for any integer $n>0$. This means that if $p\in {\mathcal{V}}_{i^*}$
  then $T^{-1}p\in {\mathcal{V}}_{i^*}$, therefore ${\mathcal{V}}_{i^*}$
  is invariant with respect to action of $T^{-1}$. If $p\in
  {\mathcal{V}}_{i^*}\subset D_{i^*}$ then $T^{-1}p\in H_{i^*i^*}
  \subset D_{i^*}$, $T^{-2}p\in H_{i^*i^*i^*}\subset D_{i^*}$,
  etc. The $h$-strips $H_{i^*i^*}$, $H_{i^*i^*i^*}$, $\ldots$ form a
  nested sequence
  \begin{displaymath}
    \ldots\subset H_{i^*\ldots i^*}\subset\ldots \subset
    H_{i^*i^*i^*}\subset H_{i^*i^*}\subset D_{i^*}
  \end{displaymath}
  and the intersection in (\ref{eq:h_cal}) is an $h$-curve which
  has the only one intersection point with ${\mathcal{V}}_{i^*}$ at
  ${\mathcal{O}}_{i^*}$.  This implies that if $p\in{\mathcal{V}}_{i^*}$
  then the distance from ${\mathcal{O}}_{i^*}$ to $T^{-n}p\in
  {\mathcal{V}}_{i^*}\cap H_{i^*\ldots i^*}$ ($n$ times $i^*$) can be
  made arbitrarily small by choosing $n$ sufficiently large. Therefore the relation
  (\ref{UnstM}) is proved. The similar reasoning
 allows to prove (\ref{StM}). The point (b)
  is proved.

  Finally, the orbit $\{\ldots, p_{-1},p_0,p_1,\ldots\}$
  corresponding to the code  $\{\ldots,i^*,i^*,i_1,$ $i_2,\ldots\}$,
  $i_1\ne i^*$, has in $D_{i^*}$ infinitely many successive entries
  $p_k$, $k<0$ . All of them are situated in all the strips $V_{i^*i^*}$,
  $V_{i^*i^*i^*}$, etc. Therefore,  $p_k\in {\mathcal{V}}_{i^*}$, $k<0$
  and applying the reasoning of the point (b) one concludes that $\lim_{n\to\infty} p_{-n}={\mathcal{O}}_{i^*}$.  The point (c) is also proved.
\end{proof}

\section{The algorithm for computation of gap solitons}\label{App_D}


\begin{algorithm}[123]
  \caption{Computation of Gap Solitons}
  \begin{algorithmic}[1]

	\State{Fix $M$, $K$ (integer, large enough).
	}
	\label{algo2}

	\State{Compute the eigenvalues $\lambda_1$ and $\lambda_2=1/\lambda_1$,
	$|\lambda_1|>1$, and corresponding eigenvectors, ${\rm e_1}=(\xi,\eta)$,
	${\rm e}_2=(\xi,-\eta)$, of the fixed point ${\mathcal{O}}_0$.
	}

	\Algphase{Phase 1: localization of the strip $H_{0\times M}$
	}

	\State{Solve numerically the Cauchy problem for (\ref{1D_rep}),
	\begin{displaymath}
	x\in[0;M\pi], \quad
	\psi(0)=\varepsilon\xi, \quad
	\psi_x(0)=\varepsilon\eta,
	\end{displaymath}
	for increasing values of $\varepsilon$ spaced by small step $\Delta\varepsilon>0$, 			from $\varepsilon=0$, while $|\psi(M\pi )|<\infty$. Find
	$\varepsilon=\varepsilon^*$ corresponding to the conditions
	\begin{displaymath}
	|\psi(x)|<\infty,
	\quad x\in [0;M\pi)
	\quad \mbox{and}
	\quad \lim_{x\to M\pi}\psi(x)=\infty.
	\end{displaymath}
	}
	
	\State{Assign $r:=\varepsilon^*$ and $l:=-\varepsilon^*$.
	}
	

	\Algphase{Phase 2: localization of  the strips $H_{(0\times M)i_1}$,
	$H_{(0\times M)i_1i_2}$, $\ldots$, $H_{(0\times M)i_1\ldots i_m}$
	}

    	\State{Allocate $A[0:4L+1]$, an array of $4L+2$ elements.
    	}
    	
    	\For{$k=1$ to $m$}
    	
    	\State\parbox[t]{\dimexpr\linewidth-\algorithmicindent}{
    	Solve numerically the Cauchy problem for (\ref{1D_rep}),
	\begin{displaymath}
	x\in[0;(M+k-1)\pi], \quad
	\psi(0)=r\xi, \quad
	\psi_x(0)=r\eta.
	\end{displaymath}
	\strut}
	
	\State{Assign $s:=\sgn\left\{\lim_{x\to(M+k-1)\pi}\psi(x)\right\}$
	}

	\State\parbox[t]{\dimexpr\linewidth-\algorithmicindent}{
	Solve numerically the Cauchy problem for (\ref{1D_rep})
	\begin{displaymath}
	x\in[0;(M+k)\pi], \quad
	\psi(0)=\varepsilon\xi, \quad
	\psi_x(0)=\varepsilon\eta
	\end{displaymath}
	for increasing values of $\varepsilon$, spaced by small step $\Delta\varepsilon>0$, 			 from $\varepsilon=l$ to $\varepsilon=r$ seeking for
	 $4L+2$ values $\varepsilon=\varepsilon_j^*$, $j=0\div (4L+1)$, such that $\psi(x)$
	satisfies the conditions
	\begin{displaymath}
	|\psi(x)|<\infty,
	\quad x\in [0;(M+k)\pi)
	\quad \mbox{and}
	\quad \lim_{x\to(M+k)\pi}\psi(x)=\infty.
	\end{displaymath}
    Save the found  $4L+2$ values $\varepsilon_j^*$ to the array $A[0:4L+1]$ and sort $A$.
	\strut}

	\State{Assign $l:=A[2L+2si_k]$ and $r:=A[2L+2si_k+1]$.}
	
    	\EndFor

	\algstore{algo2_2}
  \end{algorithmic}
\end{algorithm}

\begin{algorithm}
  \begin{algorithmic}[1]
	\algrestore{algo2_2}

	\medskip
	
	\Algphase{Phase 3: localizing $h$-strips from
	$H_{(0\times M)i_1\ldots i_m(0)}$ to
	$H_{(0\times M)i_1\ldots i_m(0\times K)}$
	}
	
	\For{$k=1$ to $K$}

	\State\parbox[t]{\dimexpr\linewidth-\algorithmicindent}{
	Solve numerically the Cauchy problem for (\ref{1D_rep})
	\begin{displaymath}
	x\in[0;(M+m+k)\pi], \quad
	\psi(0)=\varepsilon\xi, \quad
	\psi_x(0)=\varepsilon\eta
	\end{displaymath}
	for increasing values of $\varepsilon$, spaced by small step $\Delta\varepsilon>0$, 			 from $\varepsilon=l$ to $\varepsilon=r$ seeking for
	 $4L+2$ values $\varepsilon=\varepsilon_j^*$, $j=0\div (4L+1)$, such that 				$\psi(x)$ satisfies the conditions
	\begin{displaymath}
	|\psi(x)|<\infty,
	\quad x\in [0;(M+m+k)\pi)
	\quad \mbox{and}
	\quad \lim_{x\to(M+m+k)\pi}\psi(x)=\infty.
	\end{displaymath}
Save the found  $4L+2$ values $\varepsilon_j^*$ to the array $A[0:4L+1]$ and sort $A$.
	\strut}
	
	\State{Assign $l:=A[2L]$ and  $r:=A[2L+1]$.}
	
	\EndFor
	
	\Algphase{Phase 4: matching with asymptotics at $x\to+\infty$
	}
	
	\State{Solve numerically the Cauchy problem for (\ref{1D_rep})
	\begin{displaymath}
	x\in[0;(M+m+K)\pi], \quad
	\psi(0)=\varepsilon\xi, \quad
	\psi_x(0)=\varepsilon\eta
	\end{displaymath}
	for increasing values of $\varepsilon$, spaced by small step $\Delta\varepsilon>0$, 			until finding 	 $\varepsilon=\varepsilon^*$ such that $\psi(x)$ satisfies the condition
	\begin{displaymath}
	\frac{\psi_x((M+m+K)\pi)}{\psi((M+m+K)\pi)}=-\frac{\eta}{\xi}
	\end{displaymath}}
  \end{algorithmic}
\end{algorithm}


\medskip
Received xxxx 20xx; revised xxxx 20xx.
\medskip

\end{document}